\newcommand{\rr}{\mathop{{\rm I}\mskip-4.0mu{\rm R}}\nolimits}
\newtheorem{theorem}{Theorem}
\newtheorem{lemma}{Lemma}
\newtheorem{definition}{Definition}
\newtheorem{proposition}{Proposition}
\newtheorem{property}{Property}
\newtheorem{assumption}{Assumption}
\newtheorem{problem}{Problem}
\newtheorem{remark}{\textbf{Remark}}
\newcommand{\Z}{\mathop{{\rm Z}\mskip-7.0mu{\rm Z}}\nolimits}
\DeclareMathOperator*{\argmin}{arg\,min}
\def\showfig{1}
\begin{document}


\title{A Feedback Linearized Model Predictive Control Strategy for Input-Constrained Self-Driving Cars}

\author{Cristian Tiriolo and Walter Lucia
\thanks{This work was supported by the Natural Sciences and Engineering Research Council of Canada (NSERC).}%
\thanks{Cristian Tiriolo and Walter Lucia are with the Concordia Institute for Information Systems Engineering (CIISE), Concordia University, Montreal, QC, H3G 1M8, CANADA, {\tt\small cristian.tiriolo@concordia.ca}, {\tt\small walter.lucia@concordia.ca}}
}



\maketitle

\begin{abstract}
This paper proposes a novel real-time affordable solution to the trajectory
tracking control problem for self-driving cars subject to longitudinal and steering angular velocity constraints. To this end, we develop a dual-mode Model Predictive Control (MPC) solution
starting from an input-output feedback linearized description of the vehicle kinematics. 
First, we derive the state-dependent input constraints acting on the linearized model and characterize their worst-case time-invariant inner approximation. Then, a dual-mode MPC is derived to be real-time affordable and ensuring, by design, constraints fulfillment,  recursive feasibility, and uniformly ultimate boundedness of the tracking error in an ad-hoc built robust control invariant region.  
%
The approach’s effectiveness and performance are experimentally validated via laboratory experiments on a  Quanser Qcar. The obtained results show
that the proposed solution is computationally affordable and with tracking capabilities that outperform two alternative control schemes. 
\end{abstract}

\section{INTRODUCTION}
The advent of self-driving cars marks a significant step forward in automotive technology, heralding a new era of road safety, traffic efficiency, and environmental sustainability. Central to the autonomous vehicle's operational integrity is trajectory tracking control, a critical aspect for ensuring the safety, reliability, and comfort of these vehicles \cite{dixit2018trajectory, stano2023model}. The term reference trajectory refers to a sequence of consecutive waypoints, with associated spatial and temporal information, that the vehicle is required to accurately track \cite{liu2021reinforcement}.  Achieving reliable trajectory tracking enables autonomous vehicles to navigate complex environments with precision, adaptability, and safety, thus accelerating the integration of autonomous technology into everyday transportation systems \cite{paden2016survey}. 

Extensive research has been conducted to develop control strategies to solve the trajectory tracking problem for autonomous cars \cite{li2021state}, ranging from simple non-model based solutions like the well-established PID controllers \cite{nie2018longitudinal,abatari2013using}, to more sophisticated nonlinear control solutions like sliding-mode controllers \cite{wu2019path,shirzadeh2019adaptive}, and adaptive backstepping control \cite{hu2021adaptive}. In the last decade, deep learning-based approaches have been applied to the control of autonomous vehicles due to their ability to self-optimize their behavior from data and adapt to complex and dynamic environments. \cite{kuutti2020survey,grigorescu2020survey} offer an exhaustive review of the most recent developments in the application of machine learning techniques to autonomous vehicle control. Despite their relatively high tracking performance, the biggest challenge pertaining to this class of algorithms remains their dependence on large, annotated datasets for training, which can be expensive and time-consuming to collect and maintain. Additionally, machine learning models often act as ``black boxes,'' offering limited interpretability regarding how decisions are made, which raises concerns about accountability and safety in critical applications like autonomous driving.
One common drawback of the above-discussed tracking strategies is their incapability to address input constraints, i.e., physical limitations of the computed control signal, which may lead to a lack of close-loop stability guarantees \cite{lima2018experimental}. 

On the other hand, Model Predictive Control (MPC) has emerged as a premier control strategy in this domain, owing to its ability to anticipate the future behavior of the vehicle and handle multiple constraints simultaneously \cite{camacho2007nonlinear,stano2023model}. In recent years, the application of MPC in autonomous vehicles has been extensively studied, highlighting its potential in managing the complex dynamics and uncertainties inherent to vehicular control \cite{stano2023model}. Notable works \cite{falcone2007predictive,borrelli2005mpc} have underscored the efficacy of MPC in navigating autonomous vehicles through dynamic environments. Despite these advancements, the deployment of MPC in real-world scenarios faces significant difficulties. Nonlinear MPC approaches \cite{pang2022practical,cho2023model}, while robust, are computationally demanding, posing challenges to their real-time implementation \cite{schwarting2017safe,nezami2023design}. Moreover, given the nonconvex nature of the MPC optimization, the solver algorithms may be characterized by uncertain convergence and suboptimality \cite{schwarting2017safe,eiras2021two}.  
Conversely, linear MPC techniques \cite{funke2016collision,beal2012model,nezami2023design} offer computational efficiency but at the expense of model fidelity. As a matter of fact, solutions that exploit error dynamics linearized around the reference trajectory have been found to be suboptimal~\cite{majd2019stable}. 

The proposed literature review highlights the need for an MPC framework that combines the real-time operational feasibility of linear formulations with the precision of nonlinear methods. Existing linear MPC strategies often resort to simplified models, sacrificing accuracy for computational speed. Meanwhile, the precision of nonlinear MPC comes at the cost of computational feasibility, limiting its practical application in autonomous vehicles.
A possible way to mitigate the computational burdens of nonlinear approaches while preserving the accuracy of their model prediction is to resort to Feedback Linearization, a well-established technique to recast a nonlinear system into an equivalent linear one. One of the first attempts to feedback-linearize the car-bicycle kinematics can be found in \cite{de2005feedback}, where both input-output FL and dynamic FL have been proposed. However, it can be shown that if such a linearized model is exploited for predictions, even simple box-like input constraints recast into state-dependent constraints leading to nonconvex MPC formulations \cite{deng2009input,simon2013nonlinear,tiriolo2022design}. Although the combination of MPC and FL is not a new approach, the literature lacks FL-based MPC solutions for constrained systems. Recently, in \cite{tiriolo2022receding}, such an idea has been used to solve a tracking problem for differential-drive robots. 
However, to the best of the authors knowledge, there are no existing  FL-based MPC solutions for input-constrained autonomous cars.

%

\subsection{Paper's Contribution and Organization}
In this paper, we address the above-identified gap by designing a novel dual-mode FL-based MPC strategy 
for self-driving cars subject to longitudinal and steering angular velocity constraints.
%
The contributions of the paper can be summarized as follows: 
\begin{itemize}
    \item it formally characterizes the car input-output linearized tracking error dynamics and its time-varying and state-dependent input constraint set. Moreover, a worst-case time-invariant approximation of the input constraints is analytically defined;
    \item it provides a design procedure to compute a stabilizing controller and associated control invariant region for the feedback linearized vehicle model; such a solution generalizes the approach presented in \cite{tiriolo2022design}.
    \item it proposes a dual-mode linearized MPC scheme ensuring stable full-state tracking and input constraints fulfillment. Differently from existing nonlinear MPC formulations for car-like vehicles, the proposed approach requires the solution of a Quadratic Programming (QP) problem. On the other hand, contrary to linear MPC formulations, the proposed strategy does not introduce approximated model predictions along the used prediction horizon.
    \item it experimentally validates the proposed dual-mode MPC using laboratory experiments with a Quanser QCar and performance comparisons with other two competitor schemes. The developed code is available at the following GitHub repository 
    \it \url{https://github.com/PreCyseGroup/Feedback-Linearized-MPC-for-self-driving-cars}.
\end{itemize}
%

%

The remainder of the paper is organized as follows. Section \ref{sec-preliminary} collects some preliminary definitions from control invariance theory and car's kinematic modeling. Moreover, it formally states the considered trajectory tracking problem. Section \ref{sec:prop-sol} describes the proposed dual-mode MPC strategies, with a formal proof of the obtained theoretical results. Section \ref{sec:exp-res} describes the experimental testbed, the performed experiments, and the obtained results. Finally, Section \ref{sec:conclusion} concludes the paper with some final remarks.

\section{Preliminaries and Problem Formulation}\label{sec-preliminary}

Given a matrix $M$ and a vector $v,$  $M{[i,:]}$ denotes the $i-th$ row of $M,$ $M{[i,j]}$ the $(i,j)$ entry of $M,$ and $v{[i]}$ the $i-th$ element of $v.$ A continuous-time function $f(t)$ is said of class $\mathcal{C}^3$ if $f(t)$ admits three continuous derivatives in its domain. 
Given $n>$ scalars  $m_i\in \rr$, $i=1,\dots,n,$ $M=diag(\left[m_1,\dots,m_n \right])\in \rr^{n\times n}$ defines a diagonal matrix with elements $m_i$ on the main diagonal. 
Given a variable $v,$ $v(k)$ denotes the values of $v$ at the
discrete sampling time instant $k\in \Z_+:=\{0,1,\ldots\}.$  
Given a discrete-time index $i\in \Z_+$ and the signal $v(k),$  $v(k+i|k)$ denotes the $i$-steps ahead prediction of $v$ from the time instant $k.$
\begin{definition} 
	\it{Consider a dynamical system 
$z(k+1)=f(z(k),u(k))
$   
    subject to the input constraints $u(k)\in\mathcal{U},\, \forall k\geq 0.$ A set $\Sigma$ is said to be control invariant if \cite{borrelli2017predictive}:
		$$
		\forall z(t)\in\Sigma, \, \exists u(k)\in\mathcal{U}\, :\,  f(z(k),u(k))\in\Sigma,\, \forall k\geq0 
		$$ 
	}
\end{definition}
\begin{definition} 
	\it{Consider a dynamical system $z(k+1)=f_a(z(k),w(k))$ subject to bounded disturbance $w(k)\in \mathcal{W},\,\forall\,k\geq 0.$ A set $\Sigma$ is said Robust Positive Invariant (RPI) if \cite{borrelli2017predictive}:
		$$
		\forall z(0)\in\Sigma,\, z(k)\in\Sigma,\,\forall w\in\mathcal{W}, \, \forall k>0 
		$$
 }
\end{definition}
\begin{definition}
    Consider a set $\mathcal{Q}$ neighborhood of the origin. The dynamical system $z(k+1)=f_a(z(k),w(k))$ is said to be Uniformly Ultimately Bounded (UUB) in $\mathcal{Q}$ if \cite{blanchini2008set} $\forall\mu>0,\, \exists T(\mu)>0$, such that $\forall \|z(0)\|\leq \mu,\, z(k)\in\mathcal{Q},\,\forall w\in\mathcal{W},\,\forall k\geq T(\mu)$.  
\end{definition}
\subsection{Car-like vehicle modeling}
Let's consider a rear-driven car-like vehicle whose continuous-time kinematics is described by \cite{de2005feedback}:
\begin{figure}[!h]
	\centering
	\includegraphics[width=0.7\linewidth]{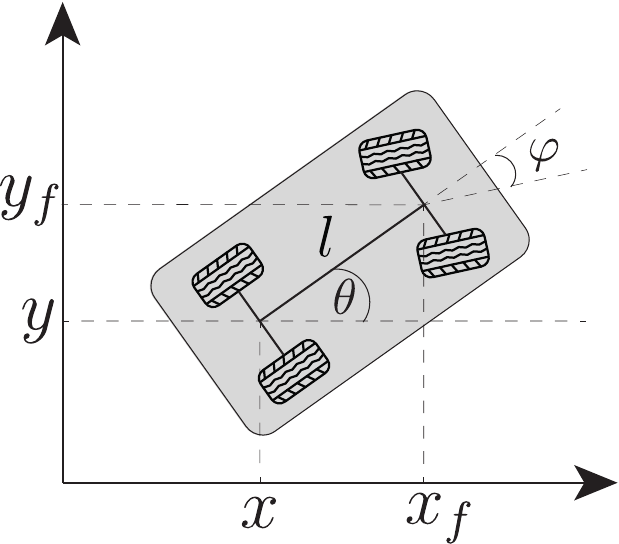}
	\caption{Car-like vehicle}
	\label{fig:car-model}
\end{figure}
\begin{equation}\label{eq:bicycle-model}
	\dot{q}(t)\!=\!\left[\begin{array}{c}
		\dot{x}(t)\\\dot{y}(t)\\\dot{\theta}(t)\\\dot{\varphi}(t)
	\end{array}\right]\!=\!\left[\begin{array}{c}
		\cos\theta(t) \\ \sin\theta(t)  \\ \frac{1}{l}\tan(\varphi(t)) \\ 0
	\end{array}\right]\!\!v(t)\!+\!\left[\begin{array}{c}
		0 \\ 0 \\ 0 \\ 1 
	\end{array}\right]\omega(t)
\end{equation}
\noindent
where $q=\left[x,y,\theta,\varphi \right]^T$ is the car's state, i.e., the
$x$ and $y$ are the Cartesian coordinates of the rear axis' midpoint, the heading angle $\theta$ of the vehicle, and the steering angle $\varphi\in\left[-\overline{\varphi},\overline{\varphi}\right],\,0<\overline{\varphi}<\frac{\pi}{2}$, respectively. On the other hand, $u=\left[v,\omega\right]^T$ are the control inputs, i.e., the longitudinal velocity of the vehicle and the steering angular velocity respectively.
Moreover, $l$ represents the car's wheelbase, i.e,  the distance in meters between the front and rear wheels, and
$$
x_f=x+l\cos\theta \quad
y_f=y+l\sin\theta
$$
are the Cartesian coordinates of the front axis' midpoint. We assume the car's model to be subject to symmetrical box-like constraints representing the admissible longitudinal and steering angular velocities that the car can perform, i.e, 
\begin{equation}\label{eq:car-constraints}
	\begin{array}{c}
		u(t)\in \mathcal{U}_{car}:=\{u\in\rr^2:\, Tu\leq g\}
	\end{array}
\end{equation}
where 
$$
T=\left[\begin{array}{rr}
	-1 & 0 \\
	0 &  -1\\
	1 & 0 \\ 
	0 & 1 \\
\end{array}\right],\quad g=\left[\begin{array}{c}
	\overline{v} \\ \overline{\omega} \\ \overline{v}\\\overline{\omega}
\end{array}\right]
$$ and
$\overline{v},\overline{\omega}>0$ are given upper bounds.

By defining a sampling time $T_s>0$, and resorting to forward Euler discretization method, the following discrete-time kinematics is obtained:
\begin{equation}\label{eq:bicycle-discrete}
    \begin{cases}
        x(k+1)&=x(k)+T_sv(k)\cos(\theta(k))\\
        y(k+1)&=y(k)+T_sv(k)\sin(\theta(k))\\
        \theta(k+1)&=\theta(k)+T_s\frac{v(k)}{l}\tan(\varphi(k))\\
        \varphi(k+1)&=\varphi(k)+T_s\omega(k)
    \end{cases}
\end{equation}
In what follows, for the sake of simplicity, we refer to \eqref{eq:bicycle-discrete} using the following compact representation: 
$$
q(k+1)=f_{car}(q(k),u(k))
$$

\subsection{Problem's statement}
Let's consider a smooth reference trajectory described in terms of Cartesian positions $x_r(t)$, $y_r(t)$, velocities  $\dot{x}_r(t)$, $\dot{y}_r(t)$ and accelerations $\ddot{x}_r(t)$, $\ddot{y}_r(t)$ and jerks $\dddot{x}_r(t)$, $\dddot{y}_r(t)$ of the rear axis center of the car. The corresponding reference car's state is denoted as $q_r(t)=\left[x_r(t),y_r(t),\theta_r(t),\varphi_r(t)\right]^T$, where $\theta_r(t)$ and $\varphi_r(t)$ are the heading and steering angles associated to the given trajectory which can be computed as \cite{de2005feedback}:
\begin{equation}\label{eq:ref-traj-variable-state}
	\begin{array}{rcl} \theta_r(t)=&\text{ATAN}_2\left(\frac{\dot{y}_r(t)}{v_r(t)},\frac{\dot{x}_r(t)}{v_r(t)}\right)\\	\varphi_r(t)=&\arctan\left(\frac{l(\ddot{y}_r(t)\dot{x}_r(t)-\ddot{x}_r(t)\dot{y}_r(t))}{v_r(t)^3}\right)
	\end{array}
\end{equation}
On the other hand, the reference inputs associated with the trajectory are given by $u_r(t)=\left[v_r(t),\, \omega_r(t)\right]^T$, where 
\begin{equation}\label{eq:ref-traj-variable-inputs}
    \begin{array}{rcl} 	v_r(t)=&\sqrt{\dot{x}_r(t)^2+\dot{y_r}(t)^2}\\ 
		\omega_r(t)=&lv_r\frac{\left(\dddot{y}_r\dot{x}_r-\dddot{x}_r\dot{y}_r\right)v_r^2-3\left(\ddot{y}_r\dot{x}_r-\ddot{x}_r\dot{y}_r\right)\left(\dot{x}_r\ddot{x}_r+\dot{y}_r\ddot{y}_r\right)}{v_r^6+l^2\left(\ddot{y}_r\dot{x}_r-\ddot{x}_r\dot{y}_r\right)^2}\\
	\end{array}
\end{equation}
Notice that the time dependency has been omitted on the right-hand side for compactness. 
\begin{assumption}\label{ass:ref-trajectory}
 The reference trajectory $q_r(t)=\left[x_r(t),y_r(t),\theta_r(t),\varphi_r(t)\right]^T$ for \eqref{eq:bicycle-model} is uniformly bounded and smooth, i.e.,
	$
	\exists \varGamma>0: \, \|q_r(t)\|<\varGamma,\,\forall t\geq 0,\,\, q_r(t) \in \mathcal{C}^3.
	$
	\hfill $\Box$
\end{assumption}
\begin{problem}\label{problem-formulation}
    Design a constrained state feedback controller 
    \begin{equation}
        u(t)=\phi(t,q(t),q_r(t),u_r(t))
    \end{equation}
    such that $u(t)\in\mathcal{U}_{car},\forall\,t\geq 0$ and stable full-state tracking is achieved, i.e.,  
	$$\exists \delta>0, \, t_0\geq0\, \text{ s.t. }\|\tilde{q}(t_0)\|< \delta \implies \|\tilde{q}(t)\|<\varepsilon, \,\forall t\geq t_0
	$$ 
 where $\tilde{q}(t)=q(t)-q_r(t)$ is the tracking error.
\end{problem}

\section{Proposed Solution}\label{sec:prop-sol}
In this section, the considered problem is addressed by combining feedback-linearization and MPC arguments. First, the control problem is described as a standard nonlinear MPC over a finite prediction horizon. Then, the nonconvex nature of the underlying MPC optimization is analyzed, and a novel predictive framework based on feedback linearization is proposed to recover a convex optimization problem that fulfills constraints while guaranteeing a bounded tracking error.
\subsection{Nonlinear MPC}\label{sec:NL-MPC}
Let's define
$\tilde{u}=u-u_r$ 
and a Linear-Quadratic (LQ) cost 
$$
\begin{array}{lcr}
J_N(k,\tilde{q}(k),\tilde{u}(k))=\displaystyle\sum_{i=0}^{N-1} &\!\!\!\!\!\!\tilde{q}(k+i+1|k)^TQ\tilde{q}(k+i+1|k)+\\&+\tilde{u}(k+i|k)^TR\tilde{u}(k+i|k)
\end{array}
$$
where
$N>0$ is the prediction horizon, and $Q=Q^T\geq 0, \, Q\in\rr^n$, $R=R^T>0,\, R\in\rr^m$ are weighting matrices for the state and control input tracking errors, respectively. Then, the optimal control law that minimizes the defined cost function over the prediction horizon $N$ can be computed as:
\begin{subequations}\label{eq:NL-MPC-optimization}
    \begin{gather}
        u(k)=\displaystyle\argmin_{u(k),\dots,u(k+N-1)} J_N(k,\tilde{q}(k),\tilde{u}(k))\,\, s.t. \label{eq:NL-MPC-1}\\
         q(k+i+1|k)=f_{car}(q(k+i|k),u(k+i))\label{eq:NL-MPC-2}\\
         u(k+i)\in\mathcal{U}_{car}\label{eq:NL-MPC-3}\\
                  \Tilde{q}(k+N|k)\in\mathcal{Q}_N\label{eq:NL-MPC-4}\\
                  \quad i=0,1\dots N-1 \nonumber
    \end{gather}
\end{subequations}
where $\mathcal{Q}_N$ is a predefined set, PI with respect to an offline-designed feedback terminal control law $u_N(k)=\phi_N(k,q(k),q_r(k),u_r(k))\in\mathcal{U}_{car},\, \forall k\geq N$.
The above is known as dual-mode MPC, i.e., for the first $N$ steps,
the control law is obtained by solving the above optimization and applying the optimal solution in a receding horizon fashion, i.e., only the first sample $u(k)$ is applied to the system \eqref{eq:bicycle-model} and the optimization is solved at any sampling time. Then,  
once the error trajectory 
$\tilde{q}(k)$ reaches $\mathcal{Q}_N$, the control law $u_{N}(k)$ associated to $\mathcal{Q}_N$ is used.
%
\begin{remark}
    The above dual-mode MPC strategy guarantees stability and input constraint fulfillment, for any initial condition $\tilde{q}(0)$ such that the optimization problem \eqref{eq:NL-MPC-optimization} is feasible \cite{mayne2000constrained}. Consequently, under the effect of the dual-mode MPC control law, the tracking error is bounded with respect to any trajectory complying with Assumption \ref{ass:ref-trajectory}. 
\end{remark}
\begin{remark}
    Although appealing, optimization \eqref{eq:NL-MPC-optimization} suffers from the following drawbacks:
    \begin{itemize}       
        \item The optimization problem is highly nonconvex due to the presence of the constraints \eqref{eq:NL-MPC-2} and \eqref{eq:NL-MPC-4}. Moreover, the obtained solutions may suffer from local minima problems \cite{eiras2021two};
                \item The computational burden associate to \eqref{eq:NL-MPC-optimization}, especially for large prediction horizon $N$, may not allow the real-time implementation of the control scheme;                 
        \item The computation of $\mathcal{Q}_N$ and associated state-feedback controller $u_N$ is not trivial for the nonlinear vehicle kinematic model \eqref{eq:bicycle-model}.
    \end{itemize}

\end{remark}

Motivated by the above drawbacks, in what follows a novel MPC formulation based on feedback linearization arguments is proposed. 
In particular, first, the input-output linearization proposed in \cite{de2005feedback} is exploited to obtain a linear description of the car's kinematics. Then, inspired by the idea introduced in \cite{tiriolo2022design}, the time-varying input constraints acting on the linearized model and their worst-case realization are analytically characterized. Finally, the obtained constrained model and worst-case arguments are used to design a tracking control strategy that ensures stability, recursive feasibility, and input constraint fulfillment.


\subsection{Input-Output Feedback Linearization}

Here, the input-output feedback-linearization introduced in \cite{de2005feedback} is used to obtain a linearized description of the car's kinematic model. 

Let's define two new outputs
\begin{equation}\label{eq:FL-output-transf}
	z=\left[\begin{array}{c}
		z_1 \\z_2
	\end{array}\right]=	\left[\begin{array}{c}
		x+l\cos(\theta)+\Delta\cos(\theta+\varphi)\\
		y+l\sin(\theta)+\Delta\sin(\theta+\varphi)
	\end{array}\right]
\end{equation}
representing the Cartesian position of a point $P$ at a distance $\Delta>0$ from the center of front wheels' axis, and a new state vector $\eta=\left[\eta_1,\,\eta_2\right]^T=\left[\theta,\,\varphi\right]^T$. 
Then, by resorting to the following input transformation depending on $\eta$:
\begin{equation}\label{eq:FL-input-transform}
	\begin{array}{c}
		w=M(\eta)u,\\
		M(\eta)=
		\left[\!\!\begin{array}{cc}
			\cos(\eta_1)-\tan(\eta_2)(\sin(\eta_1)+\frac{\Delta}{l}s_1) & -\Delta s_1 \\			\sin(\eta_1)+\tan(\eta_2)(\cos(\eta_1)+\frac{\Delta}{l} c_1) & \Delta c_1
		\end{array}\!\!\right]
	\end{array}
\end{equation}
where $
s_1 = \sin(\eta_1+\eta_2)\quad c_1 = \cos(\eta_1+\eta_2),
$
 the kinematic model \eqref{eq:bicycle-model} is recast into
\begin{subequations}
	\label{eq:feedback_linearized_model}
	\begin{gather}
		\dot{z}=w \label{eq:FL-linearized-model}\\
		\Dot{\eta}=O(\eta)w\label{eq:FL-internal-dynamics}
	\end{gather}
\end{subequations}
where
\begin{equation*}
O(\eta)=
\left[
\begin{array}{cc}
		\frac{\sin(\eta_2) c_1}{l} &    \frac{\sin(\eta_2) s_1 }{l}\\
  \frac{-\sin (\eta_2) c_1}{l}-\frac{s_1}{\Delta} & \frac{-\sin (\eta_2)s_1}{l}+\frac{c_1}{\Delta}
\end{array}
\right] 
\end{equation*}
Notice that \eqref{eq:FL-linearized-model} defines a two-single-integrator model subject to a decoupled nonlinear internal dynamics
\eqref{eq:FL-internal-dynamics}. 
The above decoupled system can be discretized via forward Euler discretization method, obtaining:
\begin{subequations}
	\label{eq:dt-feedback_linearized_model}
	\begin{gather}
		z(k+1)=Az(k)+Bw(k),\quad A=I_{2\times2},\quad B=T_sI_{2\times2}\label{eq:dt-FL-linearized-model}\\
		\eta(k+1)=\eta(k)+T_sO(\eta(k))w(k)\label{eq:dt-FL-internal-dynamics}
	\end{gather}
\end{subequations}
\begin{property}\label{property1}
    The input output feedback-linearization \eqref{eq:FL-output-transf}-\eqref{eq:feedback_linearized_model} and forward Euler discretization method commute for the car's kinematic model \eqref{eq:bicycle-model}.
\end{property}
\begin{proof}
Let's consider the output transformation \eqref{eq:FL-output-transf} and its first-order derivative
	$$
	\begin{array}{rcl}
	    \dot{z}_1&=&\dot{x}-l\sin(\theta)\dot{\theta}-\Delta\sin(\theta+\varphi)(\dot{\theta}+\dot{\varphi})\\
		\dot{z}_2&=&\dot{y}+l\cos(\theta)\dot{\theta}+\Delta\cos(\theta+\varphi)(\dot{\theta}+\dot{\varphi})
	\end{array}
	$$
Then, under forward Euler discretization arguments one obtains
	\begin{equation}
     { \scriptsize
	        \begin{split}
	    \frac{z_1(k+1)-z_1(k)}{T_s}&=\frac{x(k+1)-x(k)}{T_s}-l\sin(\theta)\frac{\theta(k+1)-\theta(k)}{T_s}-\\&- \Delta\sin(\theta+\varphi)\left[\frac{\theta(k+1)-\theta(k)}{T_s}+\frac{\varphi(k+1)-\varphi(k)}{T_s}\right]\\
		\frac{z_2(k+1)-z_2(k)}{T_s}&=\frac{y(k+1)-y(k)}{T_s}+l\cos(\theta)\frac{\theta(k+1)-\theta(k)}{T_s}+\\&+\Delta\cos(\theta+\varphi)\left[\frac{\theta(k+1)-\theta(k)}{T_s}+\frac{\varphi(k+1)-\varphi(k)}{T_s}\right]
	\end{split}
        }
	\end{equation}
By substituting  $x(k+1)$, $y(k+1)$, $\theta(k+1).$  $\varphi(k+1)$ with the right-hand sides of  \eqref{eq:bicycle-discrete} and rewriting the equation in a compact form, the resulting discrete-time evolution of $z_1$ and $z_2$ is:
\begin{equation}\label{eq:proof_commute}
	 z(k+1)=z(k)+T_s M(\eta(k))u(k)
  \end{equation}
Finally, by using the input transformation~\eqref{eq:FL-input-transform}, the discrete-time feedback linearized system \eqref{eq:dt-feedback_linearized_model} is obtained,
which is equal to the discrete-time system obtained by discretization of \eqref{eq:FL-linearized-model}.
Similarly, the nonlinear internal dynamics \eqref{eq:FL-internal-dynamics} can be discretized obtaining \eqref{eq:dt-FL-internal-dynamics}.
Hence, input-output linearization and discretization commute.
\end{proof}

\subsection{Tracking Error Model and Input Constraint Characterization}
Here, the feedback-linearized tracking error model is formally derived.  By applying the output transformation \eqref{eq:FL-output-transf}, the reference output for the input-output linearized system \eqref{eq:feedback_linearized_model} is given by 
\begin{equation}\label{eq:lin-ref-traj}
	z_r=
	\left[\begin{array}{c}
		x_r+lcos(\theta_r)+\Delta\cos(\theta_r+\varphi_r)\\
		y_r+lsin(\theta_r)+\Delta\sin(\theta_r+\varphi_r)
	\end{array}\right]
\end{equation}
Similarly, reference inputs for \eqref{eq:feedback_linearized_model} can be computed via \eqref{eq:FL-input-transform}, obtaining
\begin{equation}\label{eq:w_r}
    w_r(t)=M(\theta_r(t),\varphi_r(t))u_r(t)
\end{equation}
By defining the error vectors $\tilde{z}=z-z_r$ and $\tilde{w}=w-w_r$, $\Tilde{\eta}=\eta-\eta_r,\, \eta_r=\left[\theta_r,\phi_r\right]^T$, the input-output linearized and internal tracking error dynamics are given by:
\begin{subequations}\label{eq:error-lin-sys}
    \begin{gather}
        \dot{\tilde{z}}(t)=\tilde{w}(t)\label{eq:linearized-tracking-error-dyn}\\
        \dot{\tilde{\eta}}(t)=\kappa(\Tilde{\eta},\Tilde{w},\eta_r,w_r,t)=O(\eta(t))w(t)-O(\eta_r(t))w_r(t)\label{eq:internal-tracking-error-dyn}
    \end{gather}
\end{subequations}
which can be discretized by resorting to the Euler forward method and re-written as
\begin{subequations}\label{eq:dt-error-model}
	\begin{gather}
		\tilde{z}(k+1)=A\tilde{z}(k)+Bw(k)-Bw_r(k), \label{eq:dt-error-lin-sys}\\ A=I_{2\times 2},\quad B=T_sI_{2\times 2}\\
         \tilde{\eta}(k+1)=\tilde{\eta}(k)+T_s\kappa(\Tilde{\eta}(k),\tilde{w}(k),\eta_r(k),w_r(k),k)\label{eq:dt-internal-tracking-error-dyn}
	\end{gather}
\end{subequations}
\begin{remark}\label{remark:disturbance_ball_bound}
Since the reference trajectory is assumed to be bounded, then also $w_r(k)$ is bounded and the set of admissible $w_r(k)$ can be over-approximated by a ball $\mathcal{W}_r\subset \rr^2$ of radius $r_d,$ i.e., 
\begin{equation}\label{eq:worst-case-disturbance}
	w_r \in \mathcal{W}_r=\{w_r\in\rr^2:w_r^TW_r^{-1} w_r\leq 1\},\,\,W_r=r_d^2I_{2\times2}
\end{equation}
\end{remark}
Given the feedback linearized tracking error dynamics, the following lemma establishes sufficient conditions for bounded internal dynamics. 
\begin{lemma}\label{lem:full-state-track}
	\textit{If the reference trajectory $q_r(t)$ complies with Assumption~\ref{ass:ref-trajectory},  $v_r(t)$ and $\omega_r(t)$ satisfies \eqref{eq:ref-traj-variable-inputs} and 
			$0< v_r(t)\leq V>0, \forall t$ and $\forall |\varphi_r(t)|\leq \frac{\pi}{2},\, \forall t,$
		%
		%
 then  the tracking-error zero dynamics $\dot{\tilde{\eta}}=\kappa(\eta,0,\eta_r,w_r,t)$
		are asymptotically stable \cite[Theorems 1-3]{wang2003full}. 
		Consequently, if \eqref{eq:linearized-tracking-error-dyn} is stable,  stable full-state tracking is achieved \cite{wang2003full}.\hfill $\Box$
	}
\end{lemma}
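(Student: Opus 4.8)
The plan is to reduce both assertions to the full-state tracking results of \cite{wang2003full}, by first exhibiting the error model \eqref{eq:error-lin-sys} as an instance of the normal form treated there, and then verifying that the hypotheses of \cite[Theorems~1--3]{wang2003full} follow from Assumption~\ref{ass:ref-trajectory} together with the extra conditions $0<v_r(t)\le V$ and $|\varphi_r(t)|\le\frac{\pi}{2}$.

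First I would make the zero dynamics explicit. Setting $\tilde{w}=0$, i.e. $w=w_r$, in \eqref{eq:internal-tracking-error-dyn} yields $\dot{\tilde{\eta}}=\bigl(O(\tilde{\eta}+\eta_r)-O(\eta_r)\bigr)w_r$, a time-varying system in $\tilde{\eta}=[\tilde{\theta},\tilde{\varphi}]^T$ whose right-hand side vanishes at $\tilde{\eta}=0$ and is driven only by the bounded reference signals $\eta_r=[\theta_r,\varphi_r]^T$ and $w_r=M(\theta_r,\varphi_r)u_r$. I would then check the regularity conditions needed to invoke \cite[Theorems~1--3]{wang2003full}: the $\mathcal{C}^3$ smoothness and uniform boundedness of $q_r$ from Assumption~\ref{ass:ref-trajectory} propagate, through \eqref{eq:ref-traj-variable-state}--\eqref{eq:ref-traj-variable-inputs}, to boundedness of $\theta_r$, $\varphi_r$, $v_r$, $\omega_r$ and hence of $w_r$; the condition $0<v_r(t)\le V$ is the persistency-of-excitation requirement on the reference (the vehicle never comes to rest), which is precisely what renders the internal error dynamics asymptotically stable in \cite{wang2003full} rather than only stable; and $|\varphi_r|\le\frac{\pi}{2}$ keeps $\varphi_r$ in the chart where \eqref{eq:ref-traj-variable-state} is well posed and $M(\theta_r,\varphi_r)$ is nonsingular (indeed $\det M(\eta)=\Delta/\cos\varphi$). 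With these in place, \cite[Theorems~1--3]{wang2003full} give asymptotic stability of the zero dynamics $\dot{\tilde{\eta}}=\kappa(\eta,0,\eta_r,w_r,t)$.

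For the second assertion I would exploit the cascade structure of \eqref{eq:error-lin-sys}: the external subsystem \eqref{eq:linearized-tracking-error-dyn} is a linear system whose state $\tilde{z}$ enters the internal subsystem \eqref{eq:internal-tracking-error-dyn} only through the input $\tilde{w}$. If \eqref{eq:linearized-tracking-error-dyn} is stabilized (in the Lyapunov sense) by the chosen $\tilde{w}$, then $\tilde{z}$, and with it $\tilde{w}$, stay bounded and small; combining this with asymptotic stability of the zero dynamics and the local input-to-state-type estimate for $\tilde{\eta}$ with respect to the perturbation $\tilde{w}$ (as in \cite{wang2003full}), the interconnected error state $(\tilde{z},\tilde{\eta})$ is stable about the origin. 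Finally, since for each fixed $(\theta,\varphi)$ the output map \eqref{eq:FL-output-transf} is merely a translation in $(x,y)$, boundedness of $\tilde{z}$ together with $\tilde{\eta}\to 0$ forces the same for $x-x_r$ and $y-y_r$; hence stability of $(\tilde{z},\tilde{\eta})$ transfers to stability of $\tilde{q}=q-q_r$, which is exactly the stable full-state tracking of Problem~\ref{problem-formulation}.

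The main obstacle I expect is not the cascade argument, which is standard, but the bookkeeping needed to match the present coordinate change \eqref{eq:FL-output-transf} and the resulting matrices $O(\eta)$, $M(\eta)$ to the normal form and notation of \cite{wang2003full}, and in particular to confirm that the precise hypotheses of \cite[Theorems~1--3]{wang2003full} reduce, in the car's case, exactly to $0<v_r\le V$ and $|\varphi_r|\le\frac{\pi}{2}$ and deliver asymptotic stability of the internal dynamics rather than mere uniform boundedness; a careful restatement of those hypotheses in the notation of \eqref{eq:feedback_linearized_model} is the crux of a fully rigorous write-up.
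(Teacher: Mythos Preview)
The paper does not actually supply a proof of this lemma: the statement ends with a $\Box$ and simply cites \cite[Theorems~1--3]{wang2003full} for both assertions, treating the result as an imported fact rather than something to be argued in-line. Your proposal is therefore not competing with a paper proof but rather filling in the verification that the cited theorems apply, which is exactly the right thing to do.

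Your sketch is sound and matches the intended reduction: you correctly identify the zero dynamics $\dot{\tilde\eta}=\bigl(O(\tilde\eta+\eta_r)-O(\eta_r)\bigr)w_r$ obtained by setting $\tilde w=0$, you recognize that Assumption~\ref{ass:ref-trajectory} supplies the smoothness and boundedness of the reference signals, you pinpoint $0<v_r(t)\le V$ as the persistency condition and $|\varphi_r|\le\frac{\pi}{2}$ as the chart/nonsingularity requirement, and you close with the cascade argument linking stability of $\tilde z$ to stability of $\tilde q$. Your own caveat is the honest one: the only real work in a full write-up is the notational alignment with \cite{wang2003full}, not any new idea. In short, your proposal is correct and is a strictly more detailed version of what the paper does, which is merely to point at the reference.
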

By applying the transformation \eqref{eq:FL-input-transform} to the input constraints \eqref{eq:car-constraints}, the tracking-error dynamics \eqref{eq:dt-error-model} are subject to  the following time-varying polyhedral input constraints, depending on the internal dynamics state $\eta$ i.e.,
\begin{equation}\label{eq:time-varying-input-constraint}
	w\!\in\! \mathcal{U}(\eta)\!=\!\{w\in\rr^2\!: L(\eta)w\leq g\}, \, L(\eta)=TM^{-1}(\eta)
\end{equation}
The following lemma analytically characterizes the polyhedral set  $\mathcal{U}(\eta),$ which rotates and resizes in function of $\eta.$ 
\if\showfig 1
\begin{figure}[!h]
	\centering
	\includegraphics[width=0.63\linewidth]{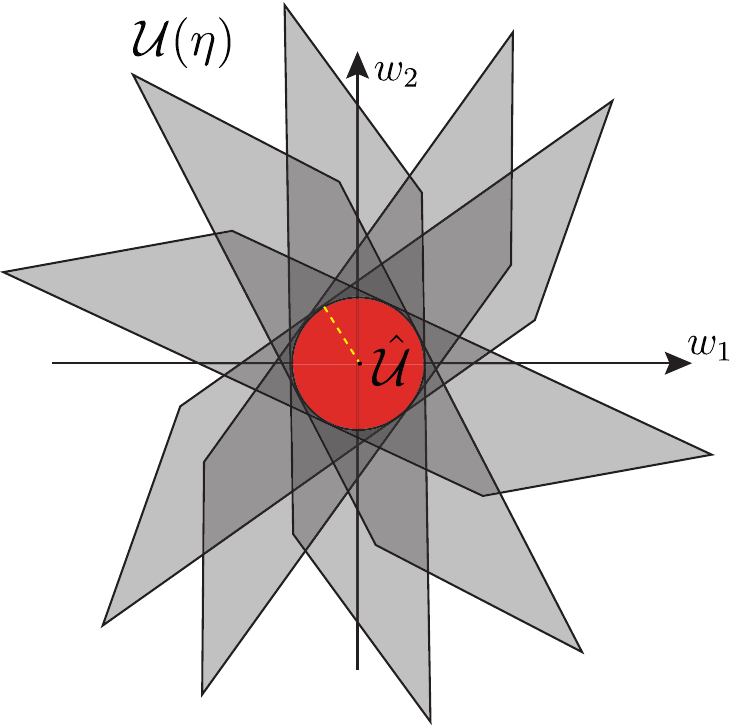}
	\caption{Time-varying input constraint set and its worst-case approximation}
	\label{fig:input-set}
\end{figure}
\fi
\begin{lemma}
    The polyhedral input constraint set \eqref{eq:time-varying-input-constraint} is a time-varying parallelogram that admits the following worst-case circular inner approximation (see Fig. \ref{fig:input-set}):
    \begin{equation}\label{eq:worst-case-constr}
		\begin{array}{lcr}
  \hat{\mathcal{U}}=\displaystyle\bigcap_{\forall\eta} \mathcal{U}(\eta)=
  \{w\in\rr^2|w^Tw\leq \hat{r}^2\},\\\hat{r}=\min \left(\frac{\Delta l \overline{\omega}}{\sqrt{\Delta^2+l^2}}, \overline{v}\right)
		\end{array}
    \end{equation}
\end{lemma}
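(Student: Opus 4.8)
The plan is to view $w=M(\eta)u$ in \eqref{eq:FL-input-transform} as an invertible change of input coordinates, so that $\mathcal{U}(\eta)$ in \eqref{eq:time-varying-input-constraint} is just the image of the box $\mathcal{U}_{car}$ of \eqref{eq:car-constraints} under a linear bijection, and then to compute the largest origin-centred disk that survives the intersection over all $\eta$.

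\emph{Step 1 (parallelogram structure).} First I would compute $\det M(\eta)$; using the angle-sum identities this simplifies to $\det M(\eta)=\Delta/\cos\eta_2$, which is finite and nonzero for $\Delta>0$ and $|\eta_2|=|\varphi|<\pi/2$. Hence $M^{-1}(\eta)$ exists and $\mathcal{U}(\eta)=M(\eta)\,\mathcal{U}_{car}$ is the image of a rectangle under an invertible linear map, i.e. a parallelogram whose edge directions and side lengths vary with $\eta$; this is the ``time-varying parallelogram'' claim. Equivalently, writing $m^1(\eta)$ and $m^2(\eta)$ for the two rows of $M^{-1}(\eta)$, the inequalities $L(\eta)w\le g$ reduce to $|m^1(\eta)w|\le\overline{v}$ and $|m^2(\eta)w|\le\overline{\omega}$, so that $\mathcal{U}(\eta)$ is the intersection of two non-parallel slabs symmetric about the origin.

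\emph{Step 2 (inscribed disk and the value of $\hat r$).} Performing the inversion and simplifying the trigonometry, I would record $m^1(\eta)=\cos\eta_2\,[\,c_1,\ s_1\,]$ and the analogous closed form for $m^2(\eta)$, yielding $\|m^1(\eta)\|=\cos\eta_2$ and $\|m^2(\eta)\|^2=\Delta^{-2}+l^{-2}\sin^2\eta_2$. Since the largest origin-centred disk contained in a symmetric slab $\{w:|a^{\top}w|\le c\}$ has radius $c/\|a\|$, the largest disk contained in $\mathcal{U}(\eta)$ has radius $r(\eta)=\min\!\big(\overline{v}/\|m^1(\eta)\|,\ \overline{\omega}/\|m^2(\eta)\|\big)$. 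Minimising over $\eta$: $\inf_\eta \overline{v}/\cos\eta_2=\overline{v}$ (attained at $\eta_2=0$) and $\inf_\eta \overline{\omega}/\|m^2(\eta)\|=\overline{\omega}\big(\sup_\eta\|m^2(\eta)\|\big)^{-1}=\overline{\omega}/\sqrt{\Delta^{-2}+l^{-2}}=\Delta l\,\overline{\omega}/\sqrt{\Delta^2+l^2}$, the supremum being approached as $|\eta_2|\to\pi/2$. This gives $\hat r$ as in \eqref{eq:worst-case-constr}, and since $\hat r\le r(\eta)$ for every $\eta$, the disk $\{w:w^{\top}w\le\hat r^2\}$ is contained in every $\mathcal{U}(\eta)$, hence in $\bigcap_{\forall\eta}\mathcal{U}(\eta)$.

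\emph{Step 3 (reverse inclusion; the hard part).} It remains to show $\bigcap_{\forall\eta}\mathcal{U}(\eta)\subseteq\{w:w^{\top}w\le\hat r^2\}$, i.e. that any $w$ with $\|w\|>\hat r$ violates some constraint for a suitable $\eta$. I would use that, for fixed $\eta_2$, as $\eta_1$ ranges over $\rr$ the \emph{direction} of $m^1(\eta)$ (resp. $m^2(\eta)$) sweeps the whole unit circle while its norm stays equal to $\cos\eta_2$ (resp. $\sqrt{\Delta^{-2}+l^{-2}\sin^2\eta_2}$); thus the facet normals of $\mathcal{U}(\eta)$ attain every orientation with these norms. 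Given $w$ with $\|w\|=r>\hat r=\min\!\big(\overline{v},\,\Delta l\,\overline{\omega}/\sqrt{\Delta^2+l^2}\big)$, at least one of the conditions $\cos\eta_2>\overline{v}/r$ or $\sqrt{\Delta^{-2}+l^{-2}\sin^2\eta_2}>\overline{\omega}/r$ can be met by a suitable $\eta_2$ (the second one for $|\eta_2|$ close enough to $\pi/2$), after which $\eta_1$ is chosen to align the corresponding facet normal with $w$; the associated row inequality of $L(\eta)w\le g$ is then strictly violated, so $w\notin\mathcal{U}(\eta)$. Combining the two inclusions yields the set equality in \eqref{eq:worst-case-constr}, in agreement with Fig.~\ref{fig:input-set}; note that the suprema are taken over all $\eta\in\rr^2$ (consistently with the ``$\forall\eta$'' in the statement), so $\hat{\mathcal{U}}$ is a conservative, trajectory-independent inner approximation. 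I expect Step 3 to be the main obstacle, in particular the small $\varepsilon$/limiting argument needed because $\sup_\eta\|m^2(\eta)\|$ is approached but not attained.
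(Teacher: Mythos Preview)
Your argument is correct and arrives at the same $\hat r$, but by a genuinely different and considerably more economical route than the paper's. The paper works at the vertex level: it writes out $L(\eta)=TM^{-1}(\eta)$, intersects the four half-planes to obtain the four vertices $V_i(\eta)$ explicitly, verifies pairwise parallelism of opposite edges, then computes the inscribed-circle radius geometrically by constructing perpendiculars from vertices to opposite edges (two cases, depending on which pair of sides is shorter), obtaining the two height functions $r_1(\eta_2)=\Delta l\,\overline{\omega}/\sqrt{\Delta^2\sin^2\eta_2+l^2}$ and $r_2(\eta_2)=\overline{v}/|\cos\eta_2|$, and finally minimises over $\eta_2$. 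You bypass all of this by observing that $\mathcal{U}(\eta)$ is the intersection of two origin-symmetric slabs $\{w:|m^i(\eta)w|\le g_i\}$, so the inscribed-disk radius is simply $\min_i g_i/\|m^i(\eta)\|$; the parallelogram claim and the two height functions then drop out of the closed forms $\|m^1(\eta)\|=\cos\eta_2$ and $\|m^2(\eta)\|^2=\Delta^{-2}+l^{-2}\sin^2\eta_2$, which match the paper's $r_1,r_2$ exactly. What your approach buys is brevity and robustness (no vertex bookkeeping or case split), while the paper's buys an explicit picture of how the parallelogram deforms with $\eta$.

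Your Step~3 actually goes beyond the paper: the paper only argues the forward inclusion $\hat{\mathcal{U}}\subseteq\bigcap_\eta\mathcal{U}(\eta)$ (``the worst-case circle inscribed in $\mathcal{U}(\eta),\forall\eta$''), whereas you sketch the reverse inclusion by letting $\eta_1$ rotate the active facet normal onto $w$ at fixed norm. Your caveat there is the right one: the equality in \eqref{eq:worst-case-constr} holds only if $\eta_2$ is allowed to range over all of $(-\pi/2,\pi/2)$, since $\sup_\eta\|m^2(\eta)\|$ is approached but not attained; under the physical restriction $|\eta_2|\le\overline{\varphi}<\pi/2$ one gets $\hat{\mathcal{U}}\subsetneq\bigcap_\eta\mathcal{U}(\eta)$ and the $\hat r$ in the statement is a conservative (trajectory-independent) lower bound, which is precisely how the paper uses it downstream.
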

\if\showfig 1
\begin{figure}[!h]
	\centering
	\includegraphics[width=1\linewidth]{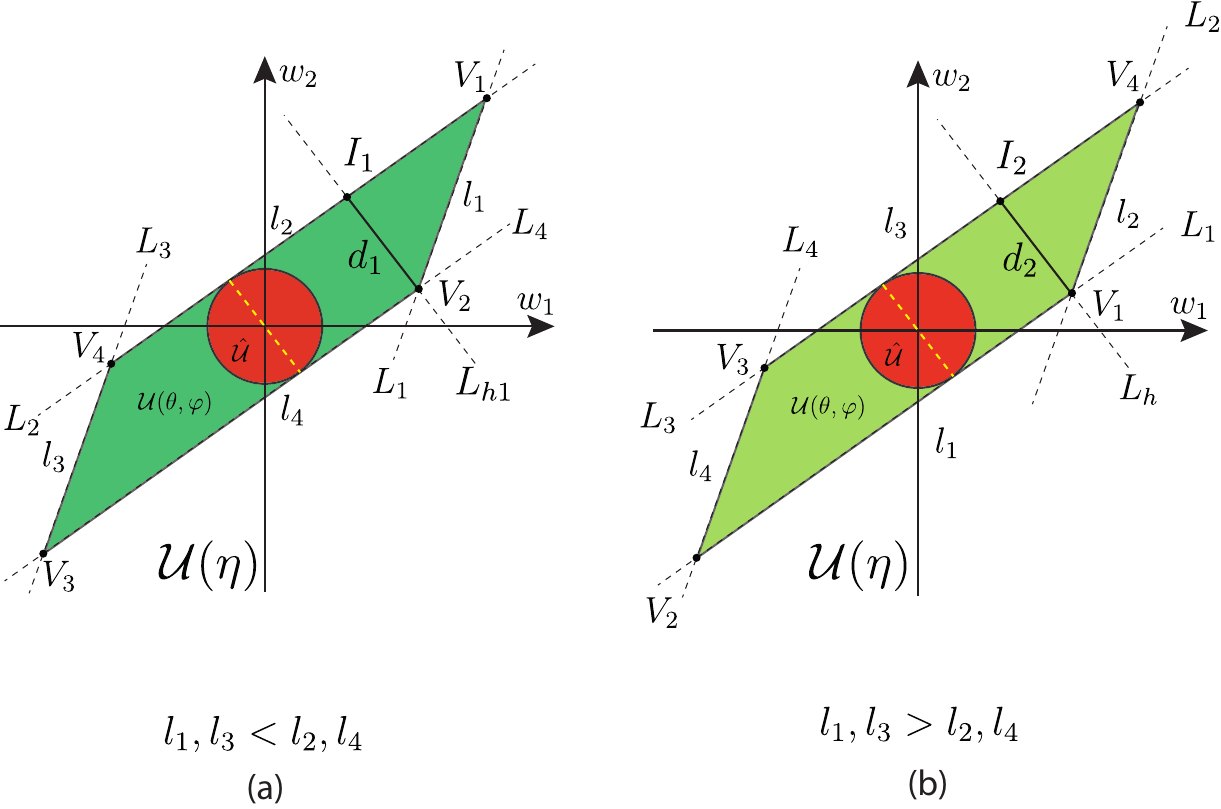}
	\caption{Possible side length configurations for $\mathcal{U}(\eta)$}
	\label{fig:polyhedron-configuration}
\end{figure}
\fi
\begin{proof}
     By defining  $
s_2 = \sin(\eta_1+2\eta_2)$ and $c_2 = \cos(\eta_1+2\eta_2),
$
the shaping matrix of the polyhedral set $\mathcal{U}(\eta)$ can be re-written as: 
    %
    \begin{equation*}
		\begin{array}{c}
			L (\eta)\!=\!\!
			\left[\!\!	\begin{array}{cc}
				-\frac{\cos(\eta_1)+ c_2 }{2} & -\frac{\sin(\eta_1)+ s_1 }{2} \\ 	\frac{\Delta s_1 -\Delta\sin(\eta_1)+2l s_2 }{2\Delta l}&\frac{-\Delta c_2 +\Delta\cos(\eta_1)-2l c_1 }{2\Delta l} \\
				\frac{\cos(\eta_1)+ c_2 }{2} &\scriptstyle \frac{\sin(\eta_1)+ s_1 }{2} \\ -\frac{\Delta s_1 -\Delta\sin(\eta_1)+2l s_2 }{2\Delta l}&-\frac{-\Delta c_2 +\Delta\cos(\eta_1)-2l c_1 }{2\Delta l}
			\end{array}\!\!\right]
		\end{array}
	\end{equation*}
    By intersecting the four hyperplanes, the four vertices have the following analytical expression:
    \begin{subequations}\label{eq:poly-vertices}
    \begin{gather}
		V_1(\eta)=\left[\begin{array}{c}
			\frac{\Delta s_1 l\overline{\omega} + \Delta\sin(\eta_1)l\overline{\omega} + \Delta \overline{v}\cos(\eta_1) - \Delta \overline{v} c_2  - 2 c_1 l\overline{v}}{(2l\cos(\eta_2))}\\
			\frac{-\overline{\omega}\Delta l \cos(\eta_1) - \overline{\omega} \Delta l  c_2  - \Delta  s_1  \overline{v} + \Delta\sin(\eta_1)\overline{v} - 2 s_2 l \overline{v}}{2l\cos(\eta_2)}
		\end{array}\right] \label{eq:poly-vertices-1}\\
		V_2(\eta)=\left[\begin{array}{c}
			\frac{-\Delta s_1 l \overline{\omega} - \Delta\sin(\eta_1)l\overline{\omega} + \Delta \overline{v}\cos(\eta_1) - \Delta \overline{v} c_2  - 2 c_1 l\overline{v}}{2l\cos(\eta_2)}\\
			\frac{\overline{\omega}\Delta l \cos(\eta_1)+\overline{\omega}\Delta l  c_2  - \Delta s_1  \overline{v}+ \Delta\sin(\eta_1)\overline{v} -2 s_2  l \overline{v}}{2l \cos(\eta_2)}
		\end{array}\right]\label{eq:poly-vertices-2}\\
	V_1(\eta)=-V_3(\eta),\quad V_2(\eta)=-V_4(\eta) \label{eq:poly-vertices-3}
    \end{gather}
	\end{subequations}
    By computing the Cartesian distances between the vertices, each side of the parallelogram has the following length:
    \begin{equation}
		\begin{array}{c}
			l_1=l_3=2\sqrt{\Delta^2 \overline{\omega}^2}\\
			l_2(\eta_2)=l_4(\eta_2)=2\sqrt{\frac{v_1^2(-\Delta^2\cos(2\eta_2) + \Delta^2 + 2l^2)}{l^2(\cos(2\eta_2) + 1)}}
		\end{array}
    \end{equation}	
    It can also be noted that the angular coefficients 
    of the four lines $L_1,\,L_2,\,L_3,\,L_4$ defining the polyhedron, 
    namely 
$m_1(\eta_1,\eta_2),\, m_2(\eta_1,\eta_2), m_3(\eta_1,\eta_2),\, m_4(\eta_1,\eta_2),$ 
are such that $m_1(\eta_1,\eta_2)=m_3(\eta_1,\eta_2)$, and $m_2(\eta_1,\eta_2)=m_4(\eta_1,\eta_2)$, $\forall \eta_1\in\rr,\eta_2\in\left[-\overline{\varphi},\overline{\varphi}\right]$. Consequently, $\mathcal{U}(\theta)$ is a time-varying parallelogram whose side lengths depend on the state variable $\eta_2$, and on the car's parameters $\Delta,\, l, \, \overline{v},\, \overline{\omega}$.

    In order to find the radius of the smallest circle inscribed in the polyhedral set, we resort to geometric arguments. By referring to  Fig.~\ref{fig:polyhedron-configuration}, two different cases must be considered 
    \textit{(a)} $l_1,l_3<l_2,l_4$ and \textit{(b)} $l_1,l_3>l_2,l_4$.
    Depending on the specific case, the diameter of the inscribed circular set can be found either as the distance of vertex $V_2$ from the point $I_{1}$, or the distance of vertex $V_1$ from the point $I_{2}$. Note that $I_1$ is the intersection of the line $L_2$ and the orthogonal to $L_2$ crossing $V_2$ (case \textit{(a)}), whereas $I_2$ is the intersection of the line $L_3$ and the orthogonal to $L_3$ crossing $V_1$ (case \textit{(b)}).
    
    Formally, the lines  $L_2$ and $L_3$  crossing $(V_1,\,V_4)$ and $(V_4,\,V_3)$, respectively are 
	\begin{equation}
    	\begin{array}{c}
			L_2: \quad w_2=m_2(\eta_1,\eta_2)w_1+h_2(\eta_1,\eta_2) \\ m_2(\eta_1,\eta_2)=-\frac{L\left[2,1\right]}{L\left[2,2\right]},\quad h_2(\eta_1,\eta_2)=\frac{g\left[2\right]}{L\left[2,2\right]}
		\end{array}
	\end{equation}
    \begin{equation}
    \begin{array}{c}
			L_3: \quad w_2=m_3(\eta_1,\eta_2)w_1+h_3(\eta_1,\eta_2) \\ m_3(\eta_1,\eta_2)=-\frac{L\left[3,1\right]}{L\left[3,2\right]},\quad h_3(\eta_1,\eta_2)=\frac{g\left[3\right]}{L\left[3,2\right]}
		\end{array}
	\end{equation}
\end{proof}
Moreover, by resorting to simple geometric arguments, the equations of the lines $L_{h1}$ and $L_{h2}$ are:
\begin{equation}
		L_{h1}: \quad w_2-V_2\left[2\right]=-\frac{1}{m_2(\eta_1,\eta_2)}(w_1-V_2\left[1\right])
\end{equation}
\begin{equation}
		L_{h2}: \quad w_2-V_1\left[2\right]=-\frac{1}{m_3(\eta_1,\eta_2)}(w_1-V_1\left[1\right])
\end{equation}
Then, the points $I_{1}$ and $I_2$ can be computed intersecting $L_2$ with $L_{h1}$ and $L_3$ with $L_{h2}$, obtaining
\begin{equation}
		I_1=\left[\begin{array}{cc}
			-m_2(\eta_1,\eta_2) & 1 \\ \frac{1}{m_2(\eta_1,\eta_2)} & 1 
		\end{array}\right]^{-1}\left[\begin{array}{c}
			h_2(\eta_1,\eta_2) \\ \frac{V_2\left[1\right]}{m_2(\eta_1,\eta_2)}+V_2\left[2\right]
		\end{array}\right]
\end{equation}
\begin{equation}
		I_2=\left[\begin{array}{cc}
			-m_3(\eta_1,\eta_2) & 1 \\ \frac{1}{m_3(\eta_1,\eta_2)} & 1 
		\end{array}\right]^{-1}\left[\begin{array}{c}
			h_3(\eta_1,\eta_2) \\ \frac{V_1\left[1\right]}{m_3(\eta_1,\eta_2)}+V_1\left[2\right]
		\end{array}\right]
\end{equation}
By noticing that the inscribed circles have diameters equal to $d_1=\overline{V_2I_1}$ (case \textit{(a)}) and  $d_2=\overline{V_1I_2}$ (case \textit{(b)}), the radii, namely $r_1(\eta_2)$ and $r_2(\eta_2),$ are
%
%
\begin{equation}
		\begin{array}{c}
		r_1(\eta_2)=\frac{1}{2}d=\frac{1}{2}\sqrt{(I_1-V_2)^T(I_1-V_2)}=\\\vspace{-0.3cm}\\
		=\frac{\Delta l \overline{\omega}}{\sqrt{\Delta^2-\Delta^2 \cos(\eta_2)^2 + l^2}}
		\end{array}
\end{equation}
\begin{equation}
		\begin{array}{c}
		r_2(\eta_2)=\frac{1}{2}d=\frac{1}{2}\sqrt{(I_2-V_1)^T(I_2-V_1)}=\\\vspace{-0.3cm}\\
		=\sqrt{\frac{\overline{v}^2}{\cos(\eta_2)^2}}
		\end{array}
\end{equation}
Furthermore, since  $\eta_2\in\left[-\overline{\eta_2},\overline{\eta_2}\right],\overline{\eta_2}<\frac{\pi}{2},$ the minimum value of $r_1(\eta_2)$ and $r_2(\eta_2)$ is obtained for $\eta_2=0,$ that it is equals to
%
 $$r_1=r_1(0)=\frac{\Delta l \overline{\omega}}{\sqrt{\Delta^2+l^2}},\quad r_2=r_2(0)=\overline{v}$$
Consequently, \eqref{eq:worst-case-constr} defines the worst-case circle inscribed in $\mathcal{U}(\eta),\forall\eta,$ concluding the proof.

\subsection{Robust Invariant Control Design}\label{sec:RPI-region}
By using similar arguments to the ones exploited in \cite{tiriolo2022design}, the linearized tracking error dynamics can be exploited to design a state feedback controller that fulfills the prescribed time-varying and state-dependent input constraints in a properly defined robust invariant region.
%

\begin{proposition}\label{prop:terminal-controller}
    The circular set
    \begin{equation}\label{eq:terminal-region-shaping}
        \Sigma_N=\{\tilde{z}\in\rr^2 | \tilde{z}^TS\tilde{z}\leq 1\},\quad S=\frac{1}{\hat{r}^2}K^TK
    \end{equation}
    is RPI for \eqref{eq:dt-error-lin-sys} under the  state-feedback controller
    \begin{equation}\label{eq:terminal-controller}
        w(k)=K\tilde{z}(k)+\hat{w}_r(k)
    \end{equation}
   where $\hat{w}_r(k)$ is the optimal solution of the following Quadratic Programming (QP) problem:
    \begin{subequations}\label{eq:terminal-optimization}
			\begin{gather}
		\hat{w}_r(k)=\displaystyle\arg\min_{\hat{w}_r} \|\hat{w}_r-w_r(k)\|_2^2 \quad s.t.\\
				K\tilde{z}(k)+\hat{w}_r\in \mathcal{U}(\eta)\label{eq:RCI-optim-constr}	
			\end{gather}
    \end{subequations}
    and  $K$ is such that
    \begin{equation}\label{eq:feasibility-constraint}
	\lambda^{-1}A_{cl}^TS^{-1}A_{cl}+(1-\lambda)^{-1}B^TW_{r_d}^{-1}B\leq S^{-1}.
    \end{equation}
    where $A_{cl}=A-BK$, $\lambda=1-\sqrt{\xi}$ and $\xi$ is the only repeated eigenvalue of the matrix $G^TB^TW_{r_d}^{-1}BG$, with $G$ such that $G^TS^{-1}G=I_{2\times 2}.$ 
\end{proposition}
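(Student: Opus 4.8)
The plan is to verify separately the two ingredients that make $\Sigma_N$ robustly positively invariant under the feedback law \eqref{eq:terminal-controller}--\eqref{eq:terminal-optimization}: first, that the law is well posed and always returns an admissible input, i.e. $w(k)\in\mathcal{U}(\eta)$ (equivalently $u(k)\in\mathcal{U}_{car}$) for every $\tilde z(k)\in\Sigma_N$; second, that the resulting closed-loop error map keeps $\tilde z$ inside $\Sigma_N$ for every admissible $w_r(k)$.

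For the first point, the key observation is that with $S=\hat r^{-2}K^TK$ the condition $\tilde z^TS\tilde z\le 1$ is equivalent to $\|K\tilde z\|_2\le\hat r$, i.e. $K\tilde z\in\hat{\mathcal{U}}$. Since the previous lemma gives $\hat{\mathcal{U}}\subseteq\mathcal{U}(\eta)$ for all $\eta$, whenever $\tilde z(k)\in\Sigma_N$ the choice $\hat w_r=0$ is feasible for the QP \eqref{eq:terminal-optimization}; the cost is strictly convex and the feasible set $\mathcal{U}(\eta)-K\tilde z(k)$ is a nonempty polyhedron, so a unique minimizer $\hat w_r(k)$ exists and, by \eqref{eq:RCI-optim-constr}, $w(k)=K\tilde z(k)+\hat w_r(k)\in\mathcal{U}(\eta)$; finally $L(\eta)=TM^{-1}(\eta)$ gives $u(k)=M^{-1}(\eta)w(k)\in\mathcal{U}_{car}$.

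For the second point, substituting \eqref{eq:terminal-controller} into \eqref{eq:dt-error-lin-sys} yields the closed-loop recursion $\tilde z(k+1)=A_{cl}\tilde z(k)+B\,d(k)$ with $d(k):=\hat w_r(k)-w_r(k)$. Because $\hat w_r(k)$ is the feasible point closest to $w_r(k)$ and $0$ is feasible (from the previous paragraph), $\|d(k)\|_2\le\|w_r(k)\|_2\le r_d$, so $d(k)\in\mathcal{W}_r$: the effective disturbance never exceeds the nominal reference-input ball. It then remains to prove that $\Sigma_N$ is RPI for the linear system $\tilde z^+=A_{cl}\tilde z+Bd$, $d^TW_{r_d}^{-1}d\le 1$. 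I would do this via the S-procedure: $\tilde z(k)^TS\tilde z(k)\le1$ and $d(k)^TW_{r_d}^{-1}d(k)\le1$ imply $\tilde z(k+1)^TS\tilde z(k+1)\le1$ as soon as there is $\lambda\in(0,1)$ satisfying the associated quadratic/LMI certificate; after a Schur-complement rearrangement and the elementary split $\min_{\lambda}[\lambda^{-1}a^2+(1-\lambda)^{-1}b^2]=(a+b)^2$ for the cross term, that certificate becomes exactly \eqref{eq:feasibility-constraint}, and the prescribed multiplier $\lambda=1-\sqrt{\xi}$ (with $\xi$ the repeated eigenvalue of $G^TB^TW_{r_d}^{-1}BG$, $G^TS^{-1}G=I$) is the value that makes it tight, which is precisely what the design condition on $K$ enforces. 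Combining the two points concludes that $\Sigma_N$ is RPI.

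The main obstacle is the last part of the second point: turning the geometric inclusion $A_{cl}\Sigma_N\oplus B\mathcal{W}_r\subseteq\Sigma_N$ into the closed-form inequality \eqref{eq:feasibility-constraint}, correctly handling the coordinate change $G$ under which both ellipsoids become balls, checking that $G^TB^TW_{r_d}^{-1}BG$ has a single repeated eigenvalue in those coordinates, and confirming that $\lambda=1-\sqrt{\xi}$ is the optimal multiplier; by contrast, the admissibility claim of the first point is essentially immediate once the equivalence $\tilde z^TS\tilde z\le1\Leftrightarrow K\tilde z\in\hat{\mathcal{U}}$ is recognized. I would lean on the ellipsoidal-invariance machinery of \cite{tiriolo2022design} to discharge the routine LMI algebra.
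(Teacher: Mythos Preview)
Your proposal is correct and follows essentially the same two-step structure as the paper's own proof: first establishing input admissibility via the equivalence $\tilde z^TS\tilde z\le 1\Leftrightarrow K\tilde z\in\hat{\mathcal U}\subseteq\mathcal U(\eta)$ so that $\hat w_r=0$ is always feasible, then deriving the closed-loop recursion $\tilde z^+=A_{cl}\tilde z+B(\hat w_r-w_r)$, bounding the residual disturbance in $\mathcal W_r$, and invoking the standard ellipsoidal RPI condition that yields \eqref{eq:feasibility-constraint}. The only minor difference is that the paper delegates the final ellipsoidal-invariance step to \cite[Section~3]{kolmanovsky1998theory} rather than sketching the S-procedure argument or citing \cite{tiriolo2022design}, and your projection-based bound $\|\hat w_r-w_r\|\le\|w_r\|$ is a slightly sharper justification than the paper's informal ``worst-case when $\hat w_r(k)=0$'' remark.
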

\begin{proof}

For the disturbance-free model (i.e. obtained from \eqref{eq:dt-error-lin-sys} when $w_r(k)=0,\forall\,k$), any stabilizing controller $w(k)=K\Tilde{z}(k)$ fulfills the input constraint for any $\Tilde{z} \in  \Sigma_N$ where $\Sigma_N$ is as in \eqref{eq:terminal-region-shaping}. Specifically, the set  $\Sigma_N$ is obtained by plugging the state-feedback controller $w(k)=K\Tilde{z}(k)$ into the circular region \eqref{eq:worst-case-constr}, representing the worst-case set of admissible input for \eqref{eq:dt-error-model}. Moreover, since in the disturbance-free case, \eqref{eq:feasibility-constraint} reduces to a standard Lyapunov inequality $(A-BK)^TS^{-1}(A-BK)-S^{-1}\leq 0,$ then if $K$ fulfills \eqref{eq:feasibility-constraint} then $\Sigma_N$ is also a positively invariant region. 

On the other hand, in the presence of $w_r(k)\neq 0$ and under the control law $w(k)=K\tilde{z}(k)+\hat{w}_r(k),$ the closed-loop system is
    \begin{equation}\label{eq:closed-loop-sys}
        \begin{array}{rcl}
             \Tilde{z}(k+1)&=&(A-BK)\Tilde{z}(k)+B(\hat{w}_r(k)-w_r(k))\\
             &=&A_{cl}\Tilde{z}(k)+Bw_d(k)
        \end{array}
    \end{equation}
with $w_d=\hat{w}_r(k)-w_r(k).$
If $\hat{w}_r(k)$ is given by the solution of \eqref{eq:terminal-optimization}, then the control law 
$w(k)=K\tilde{z}(k)+\hat{w}_r(k)$ fulfils the input constraints for any $\Tilde{z} \in  \Sigma_N$. Moreover, $w_d$ is bounded inside the set $\mathcal{W}_r$ (with the worst-case happening when $\hat{w}_r(k)=0$). Finally, as proven in \cite[Section 3]{kolmanovsky1998theory}, 
if $K$ fulfils \eqref{eq:feasibility-constraint}, then
$ \Sigma_N$ is RPI for \eqref{eq:closed-loop-sys}, concluding the proof.

\end{proof}

\begin{remark}
    In \cite{tiriolo2022design}, the authors have proposed an analytical design of the state feedback controller such that it is optimal for a given linear quadratic cost. Also, it is worth mentioning that \eqref{eq:feasibility-constraint} represents a sufficient condition to ensure RPI. For a more exhaustive discussion on necessary and sufficient conditions, the interested reader may refer to \cite{kolmanovsky1998theory}.
\end{remark}

\subsection{Feedback Linearized Model Predictive Control}
Under input-output linearization arguments, optimization \eqref{eq:NL-MPC-optimization} can be equivalently rewritten as follows:
\begin{subequations}\label{eq:exact-FL-optimization}
	\begin{gather}
		\displaystyle\min_{w(k),\dots,w(k+N-1)}\displaystyle J_N(k,\tilde{z}(k),\Tilde{w}(k)) \label{eq:exact-FL-1}\\
		\tilde{z}(k+i+1|k)=A\tilde{z}(k+i|k)+Bw(i)-Bw_r(i)\label{eq:exact-FL-2}\\
         \eta(k+i+1|k)=\eta(k+i|k)+T_sO(\eta(k+i|k))w(k+i) \label{eq:exact-FL-3}\\
        L(\eta(k+i|k))w(k+i)\leq g  \label{eq:exact-FL-4}\\ \forall i=0,1,\dots N-1\nonumber\\
        \tilde{z}^T(k+N|k)S\tilde{z}(K+N|k)\leq 1 \label{eq:exact-FL-5}
	\end{gather}
\end{subequations}
Optimization \eqref{eq:exact-FL-optimization} is still nonconvex due to constraints \eqref{eq:exact-FL-3}-\eqref{eq:exact-FL-4}. Indeed, $\forall i\geq1,$ the input constraints depend on the predicted state of the internal dynamics which is a nonlinear and nonconvex function of the control inputs. One possible way to convexify the optimization problem is to substitute the polyhedral constraint \eqref{eq:exact-FL-4} with its quadratic worst-case approximation \eqref{eq:worst-case-constr} $\forall i\geq 1$, which is independent of the nonlinear dynamics state $\eta$. On the other hand, to mitigate the conservativeness of the MPC controller,  for $i=0$, since $\eta(k)$ can be measured, the actual polyhedral constraint can be used.
Consequently, optimization \eqref{eq:exact-FL-optimization} can be rewritten as: 
\begin{subequations}\label{eq:conservative-FL-optimization}
	\begin{gather}
		\displaystyle\min_{w(k),\dots,w(k+N-1)}\displaystyle J_N(k,\tilde{z}(k),\Tilde{w}(k)) \label{eq:conservative-FL-1}\\
		\tilde{z}(k+i+1|k)=A\tilde{z}(k+i|k)+Bw(i)-Bw_r(i)\label{eq:conservative-FL-2}\\
        \forall i=0,1,\dots N-1\nonumber\\
        L(\eta(k))w(k)\leq g  \label{eq:conservative-FL-4}\\ 
        w(k+i)^Tw(k+i)\leq \hat{r}^2,\, i=1,\dots N-1 \label{eq:conservative-FL-5}\\
           \tilde{z}^T(k+N|k)S\tilde{z}(K+N|k)\leq 1 \label{eq:conservative-FL-6}
	\end{gather}
\end{subequations}
which is a Quadratically Constrained Quadratic Programming (QCQP) problem. 
\begin{proposition}
 The QCQP problem \eqref{eq:conservative-FL-optimization} can be rewritten in the following standard form:
\begin{subequations}\label{eq:MPC-compact-optim}
    \begin{gather}
       \mathbf{w^*}=\displaystyle\arg \min_{\mathbf{w}}\displaystyle \frac{1}{2}\mathbf{w}^TH\mathbf{w}+p^T\mathbf{w}\,s.t.\label{eq:MPC-compact-1}\\
        \hat{L}(\eta(k))\mathbf{w}\leq g \label{eq:MPC-compact-2}\\
        \mathbf{w}^T\hat{Q}_u\mathbf{w}\leq 1 \label{eq:MPC-compact-3}\\
        \mathbf{w}^T\Theta^T_NS\Theta_N\mathbf{w}\!+\!2\Tilde{z}^T(k)\Psi^T_NS\Theta_N\mathbf{w}\leq 1\!-\!\Tilde{z}^T(k)\Psi_N^TS\Psi_N\Tilde{z}(k)\label{eq:MPC-compact-4}
    \end{gather}
\end{subequations}
where $$H=\Theta^T\hat{Q}\Theta+\hat{R}$$ and $$p=\Theta^T\hat{Q}\Psi\Tilde{z}(k)-\Theta^T\hat{Q}\Theta\mathbf{w_r}-\hat{R}\mathbf{w_r}$$
with
\begin{equation*}
	\Psi=\left[\begin{array}{c}
		A \\ A^2 \\ \vdots \\ A^N
	\end{array}\right], \quad \displaystyle\begin{array}{lcr}
	\Theta=\left[\begin{array}{cccc}
		B&0&\dots&0 \\
		AB&B&\dots&0\\
		\vdots&\vdots&\ddots&\vdots\\
		A^{N-1}B&A^{N-2}B&\dots&B
	\end{array}\right]\end{array}
\end{equation*}
\begin{equation*}
    \Psi_N=A^N,\quad \Theta_N=\left[A^{N-1}B,\,A^{N-2}B,\dots,AB,\,B\right]
\end{equation*}
\begin{equation*}
		\hat{Q}=\left[\begin{array}{cccc}
		Q & 0 & \dots & 0 \\ 
		0 & Q & \dots & 0 \\
		\vdots & \vdots &\ddots& \vdots\\
		0 & 0 & \dots & Q  
	\end{array}\right],
    \quad \hat{R}=\left[\begin{array}{cccc}
		R & 0 & \dots & 0 \\ 
		0 & R & \dots & 0 \\
		\vdots & \vdots &\ddots& \vdots\\
		0 & 0 & \dots & R  
	\end{array}\right]
\end{equation*}
\begin{equation*}
\hat{Q}_u=\left[\begin{array}{cccc}
		0 & 0 & \dots & 0 \\ 
		0 & \frac{1}{\hat{r}^2}I_{2\times 2} & \dots & 0 \\
		\vdots & \vdots &\ddots& \vdots\\
		0 & 0 & \dots & \frac{1}{\hat{r}^2}I_{2\times 2}
	\end{array}\right]
\end{equation*}
$$
\hat{L}(\eta(k))=\left[L(\eta(k)),\,0,\dots 0\right]
$$
\end{proposition}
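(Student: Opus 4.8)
The plan is to \emph{condense} the finite-horizon prediction into the single stacked decision vector $\mathbf{w}=[\,w(k)^T,\dots,w(k+N-1)^T\,]^T$ and then identify the resulting data with those appearing in \eqref{eq:MPC-compact-optim}. First I would collect the predicted errors into $\tilde{\mathbf{z}}=[\,\tilde z(k+1|k)^T,\dots,\tilde z(k+N|k)^T\,]^T$ and the reference inputs into $\mathbf{w_r}=[\,w_r(k)^T,\dots,w_r(k+N-1)^T\,]^T$. Unrolling the linear recursion \eqref{eq:conservative-FL-2} then gives the batch prediction $\tilde{\mathbf{z}}=\Psi\tilde z(k)+\Theta(\mathbf{w}-\mathbf{w_r})$ with $\Psi$ and $\Theta$ the block matrices of the statement; the last block-row of this identity is $\tilde z(k+N|k)=\Psi_N\tilde z(k)+\Theta_N(\mathbf{w}-\mathbf{w_r})$.

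Next I would substitute this into the cost. Since $J_N$ in \eqref{eq:conservative-FL-1} equals $\tilde{\mathbf{z}}^T\hat Q\tilde{\mathbf{z}}+(\mathbf{w}-\mathbf{w_r})^T\hat R(\mathbf{w}-\mathbf{w_r})$ with the block-diagonal $\hat Q,\hat R$, inserting the batch prediction and expanding in powers of $\mathbf{w}$ yields $J_N=\mathbf{w}^T(\Theta^T\hat Q\Theta+\hat R)\mathbf{w}+2\,p^T\mathbf{w}+c$, where $p=\Theta^T\hat Q\Psi\tilde z(k)-\Theta^T\hat Q\Theta\mathbf{w_r}-\hat R\mathbf{w_r}$ and $c$ does not depend on $\mathbf{w}$. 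Because $H:=\Theta^T\hat Q\Theta+\hat R$ is symmetric and, using $R>0$, positive definite, and because rescaling the objective by $1/2$ and dropping the constant $c$ leaves the minimizer unchanged, the cost \eqref{eq:conservative-FL-1} is equivalent to \eqref{eq:MPC-compact-1}.

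Then I would rewrite the three constraint families. The first-stage polyhedral constraint \eqref{eq:conservative-FL-4} acts on $w(k)$ only, so padding $L(\eta(k))$ with zero blocks over the remaining stages gives $\hat L(\eta(k))\mathbf{w}\le g$, i.e. \eqref{eq:MPC-compact-2}. The worst-case circular constraints \eqref{eq:conservative-FL-5}, one per stage $i=1,\dots,N-1$ reading $w(k+i)^Tw(k+i)\le\hat r^2$, are stacked into the single inequality $\mathbf{w}^T\hat Q_u\mathbf{w}\le1$ with $\hat Q_u=\mathrm{diag}\big(0,\tfrac{1}{\hat r^2}I_{2\times2},\dots,\tfrac{1}{\hat r^2}I_{2\times2}\big)$, the leading zero block encoding that the stage $i=0$ is exempt; this is \eqref{eq:MPC-compact-3}. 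Finally, substituting the last-row prediction for $\tilde z(k+N|k)$ into the terminal constraint \eqref{eq:conservative-FL-6}, expanding the quadratic form and transferring the $\mathbf{w}$-independent term to the right-hand side gives \eqref{eq:MPC-compact-4}. Collecting \eqref{eq:MPC-compact-1}--\eqref{eq:MPC-compact-4} proves the claim.

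I do not expect a genuinely hard step here; the argument is algebraic bookkeeping, but care is needed in three places. First, one must track which terms of the expanded cost are truly $\mathbf{w}$-independent, so that discarding them together with the $1/2$ rescaling is legitimate. Second, the index shift in $\hat Q_u$: the missing $i=0$ constraint forces the leading $2\times2$ block to vanish while all the others equal $\tfrac{1}{\hat r^2}I_{2\times2}$. Third, the terminal-constraint expansion: consistently with the form displayed in \eqref{eq:MPC-compact-4}, the reference contribution $\Theta_N\mathbf{w_r}$ is folded into the prediction term $\Theta_N\mathbf{w}$, so that the cross term $2\tilde z^T(k)\Psi_N^TS\Theta_N\mathbf{w}$ and the constant $\tilde z^T(k)\Psi_N^TS\Psi_N\tilde z(k)$ come out exactly as written.
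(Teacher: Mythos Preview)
Your proposal is correct and follows essentially the same route as the paper: stack the decision variables and references, write the batch prediction $\tilde{\mathbf z}=\Psi\tilde z(k)+\Theta(\mathbf w-\mathbf w_r)$, expand the cost into the quadratic form $\tfrac12\mathbf w^T H\mathbf w+p^T\mathbf w+c$ (dropping the constant), and then match each constraint family with its compact counterpart. Your three ``care'' remarks even make explicit points that the paper's proof only asserts.
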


\begin{proof}
    Let's define the decision variables vector $\mathbf{w}=\left[w(k),\, w(k+1),\dots w(k+N-1)\right]^T$, and the predicted reference input vector $\mathbf{w_r}=\left[w_r(k),\, w_r(k+1),\dots w_r(k+N-1\right]^T$. Then, using \eqref{eq:conservative-FL-2}, the model predictions $\mathbf{\Tilde{z}}=\left[\Tilde{z}(k+1|k),\dots, \Tilde{z}(k+N|k)\right]^T$ can be rewritten in a compact form as $\mathbf{\Tilde{z}}=\Psi\Tilde{z}(k)+\Theta(\mathbf{w}-\mathbf{w}_r)$. As a consequence the cost function
    $$
    \begin{array}{lcr}
         J_N(k,\tilde{z}(k),\tilde{w}(k))=\frac{1}{2}\displaystyle\sum_{i=0}^{N-1} \tilde{z}(k+i+1|k)^TQ\tilde{z}(k+i+1|k)\\+\tilde{w}(k+i|k)^TR\tilde{w}(k+i|k)      
    \end{array}
   $$ 
    can be rewritten as $J(\mathbf{w})=\frac{1}{2}[\left(\Phi\Tilde{z}(k)+\Theta(\mathbf{w}-\mathbf{w_r})\right)^T\hat{Q}\left(\Phi\Tilde{z}(k)+\Theta(\mathbf{w}-\mathbf{w_r})\right)+(\mathbf{w}-\mathbf{w_r})^T\hat{R}(\mathbf{w}-\mathbf{w_r})]=\frac{1}{2}\mathbf{w}^TH\mathbf{w}+p^T\mathbf{w}+c$. Notice that in optimization \eqref{eq:MPC-compact-optim} the term $c$ has been dropped since it does not affect the optimal solution of the optimization.  By applying the same arguments, it is easy to show that $  L(\eta(k+i|k))w(k+i)\leq g,\, i=0,1,\dots N-1 \iff \hat{L}(\eta(k))\mathbf{w}\leq g$, $ w(k+i)^Tw(k+i)\leq \hat{r}^2,\, i=1,\dots N-1 \iff \mathbf{w}^T\hat{Q}_u\mathbf{w}\leq 1 $, and $\tilde{z}^T(k+N|k)S\tilde{z}(k+N|k)\leq 1\iff \mathbf{w}^T\Theta^T_NS\Theta_N\mathbf{w}+2\Tilde{z}(k)\Psi^T_NS\Theta_N\mathbf{w}\leq 1-\Tilde{z}^T(k)\Psi_N^TS\Psi_N\Tilde{z}(k) $
\end{proof}
\begin{remark}\label{rem:QP-formulation}
    The QCQP problem can be recast into a computationally more affordable QP problem. Specifically, the quadratic constraints \eqref{eq:MPC-compact-3} and \eqref{eq:MPC-compact-4} can be replaced with polyhedral inner approximations. In particular, by defining two polyhedral sets $\mathcal{P}_w=\{w\in\rr^2: P_ww\leq p_w\}\subset\hat{\mathcal{U}},\, P_w\in\rr^{n_w\times 2},\, p_w\in\rr^{n_w}$ and $\mathcal{P}_N=\{\tilde{z}\in\rr^2: P_{\Tilde{z}_N}\Tilde{z}\leq p_{\Tilde{z}_N}\}\subset\Sigma_N,\, P_{\Tilde{z}_N}\in\rr^{n_N\times 2},\, p_{\Tilde{z}_N}\in\rr^{n_{\Tilde{z}_N}}$, where $n_w$ and $n_{\Tilde{z}_N}$ are the number of sides of the polyhedral approximations $\mathcal{P}_w$ and $\mathcal{P}_{\Tilde{z}_N}$, respectively. Then, constraint \eqref{eq:MPC-compact-3} can be replaced with 
    \begin{equation}\label{eq:approx-poly-constr1}
        \hat{P}_w\mathbf{w}\leq \hat{p}_w
    \end{equation}
    where 
    \begin{equation*}
        \hat{P}_w=\left[\begin{array}{cccc}
		0 & 0 & \dots & 0 \\ 
		0 & P_w & \dots & 0 \\
		\vdots & \vdots &\ddots& \vdots\\
		0 & 0 & \dots & P_w  
	\end{array}\right], \quad \hat{p}_u=\left[\begin{array}{c}
		0 \\ 
		p_w \\
	    \vdots\\
		p_w
	\end{array}\right]
    \end{equation*}
    Similarly, the quadratic constraint \eqref{eq:conservative-FL-6} can be replaced with its polyhedral approximation $P_{\Tilde{z}_N}\tilde{z}(k+N)\leq p_{\Tilde{z}_N}$ which can be rewritten as a function of the decision variable $\mathbf{w}$, i.e.,
    \begin{equation}\label{eq:approx-poly-constr2}
         P_{\Tilde{z}_N}\Theta_N\mathbf{w}\leq p_{\Tilde{z}_N}-P_{\Tilde{z}_N}\Psi_N\Tilde{z}(k)
    \end{equation}
    Therefore, replacing \eqref{eq:MPC-compact-3}-\eqref{eq:MPC-compact-4} with \eqref{eq:approx-poly-constr1}-\eqref{eq:approx-poly-constr2}, a QP optimization is obtained. Notice that the conservativeness  and computational complexity of the derived QP problem depends on the number of sides of the polyhedra approximations, i.e. $n_w$ and $n_N$, which are additional design parameters. 
\end{remark}

All the above developments can be collected into the computable Algorithm~\ref{alg:FL-MPC}, which, as proved in the following theorem, provides a solution to  Problem~\ref{problem-formulation}.
\begin{algorithm}[!h]
	\caption{Dual-Mode Feedback-Linearized Tracking MPC (Dual-Mode FL-MPC) algorithm}\label{alg:FL-MPC}
	\noindent	\textit{Offline:}
	\begin{algorithmic}[1]
		\State Find $K$ solving \eqref{eq:feasibility-constraint} and set $S$ as in \eqref{eq:terminal-region-shaping}
        %
        \State Set the prediction horizon $N$ such that \eqref{eq:MPC-compact-optim} is feasible with the initial condition $\tilde{z}(0)=z(0)-z_r(0)$
	\end{algorithmic}
	\textit{Online:}
	\begin{algorithmic}[1]
		\State  Estimate $x(k),\,y(k), \,\theta(k),\, \varphi(k)$ and compute $\Tilde{z}(k)=z(k)-z_r(k)$, and $\eta(k)=\left[\theta(k),\varphi(k)\right]^T$. \label{alg-step-1}
		\State  Compute $L(\eta(k))$ as in \eqref{eq:time-varying-input-constraint} and $w_r(k)$ as in \eqref{eq:w_r};
		\If{$\tilde{z}(k)\notin\Sigma_N$}
		\State Find $\mathbf{w}^*$ solving \eqref{eq:MPC-compact-optim} and set $w(k)={w}^*(k)$
		\Else{} 
         \begin{equation}\label{eq:terminal-law}
		    w(k)=K\tilde{z}(k)+\hat{w}_r(k)
		\end{equation} 
        where $\hat{w}_r(k)$ is the optimal solution of \eqref{eq:terminal-optimization}
		\EndIf
		\State  Compute 					\vspace{-0.2cm}
		\begin{equation} \label{eq:final_control_law}
			\left[v(k),\omega(k)\right]^T=M^{-1}(\eta(k)){w}(k)
		\end{equation}
		and apply it to the car;  $k\leftarrow k+1$, go to \ref{alg-step-1};
	\end{algorithmic}
\end{algorithm}
\begin{theorem}\label{theorem} \it
For any $\tilde{z}(0)$ such that \eqref{eq:MPC-compact-optim} is feasible, the tracking FL-MPC strategy described in Algorithm~\ref{alg:FL-MPC} provides a solution to Problem~\ref{problem-formulation}.
\end{theorem}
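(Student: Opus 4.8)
The plan is to certify Problem~\ref{problem-formulation} by establishing, in order: exactness of the prediction model, recursive feasibility of the online optimization, pointwise input‑constraint satisfaction, and boundedness of the full tracking error. First I would invoke Property~\ref{property1}: the discrete feedback‑linearized model \eqref{eq:dt-feedback_linearized_model} is an \emph{exact} representation of the discretized car kinematics \eqref{eq:bicycle-discrete}, hence the convex problem \eqref{eq:conservative-FL-optimization} (equivalently its compact form \eqref{eq:MPC-compact-optim}) is an inner restriction of the exact FL problem \eqref{eq:exact-FL-optimization}, which itself coincides with the nonlinear MPC \eqref{eq:NL-MPC-optimization} through the invertible transformations \eqref{eq:FL-output-transf}--\eqref{eq:FL-input-transform}. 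Consequently any $\mathbf{w}^*$ returned in Algorithm~\ref{alg:FL-MPC} is admissible for the true vehicle and Step~\eqref{eq:final_control_law} recovers a well‑defined $u(k)$.

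Next I would prove recursive feasibility by the standard shifted‑sequence argument. Given the optimal $\mathbf{w}^*(k)=[w^*(k|k),\dots,w^*(k+N-1|k)]$ feasible at time $k$ with $\tilde{z}(k)\notin\Sigma_N$, consider at time $k+1$ the candidate $[w^*(k+1|k),\dots,w^*(k+N-1|k),\,K\tilde{z}(k+N|k)]$ and check each constraint of \eqref{eq:conservative-FL-optimization}: the dynamics \eqref{eq:conservative-FL-2} hold by construction because $w_r$ is known exactly over the horizon; the first‑step polyhedral constraint \eqref{eq:conservative-FL-4} at $\eta(k+1)$ holds since $w^*(k+1|k)\in\hat{\mathcal{U}}\subseteq\mathcal{U}(\eta)$ for every $\eta$ by Lemma~2; the worst‑case ball constraints \eqref{eq:conservative-FL-5} hold for the shifted entries by optimality at $k$ and for the appended entry because $\tilde{z}(k+N|k)\in\Sigma_N$ with $S=\tfrac{1}{\hat{r}^2}K^TK$ gives $\|K\tilde{z}(k+N|k)\|^2\le\hat{r}^2$; and the terminal constraint \eqref{eq:conservative-FL-6} holds because, with the worst‑case realization $\hat{w}_r=0$, $\tilde{z}(k+N+1|k+1)=A_{cl}\tilde{z}(k+N|k)-Bw_r(k+N)\in\Sigma_N$ by the RPI property of Proposition~\ref{prop:terminal-controller}. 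For the terminal branch $\tilde{z}(k)\in\Sigma_N$, the QP \eqref{eq:terminal-optimization} is always feasible ($\hat{w}_r=0$ gives $w=K\tilde{z}(k)\in\hat{\mathcal{U}}\subseteq\mathcal{U}(\eta(k))$), and Proposition~\ref{prop:terminal-controller} yields $\tilde{z}(k+1)\in\Sigma_N$, so the terminal mode is self‑sustaining. Hence feasibility propagates for all $k\geq 0$ from any feasible $\tilde{z}(0)$.

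Input‑constraint satisfaction then follows pointwise: in Step~4, $w(k)=w^*(k|k)$ satisfies \eqref{eq:conservative-FL-4}, i.e. $TM^{-1}(\eta(k))w(k)\le g$; in the terminal branch $w(k)=K\tilde{z}(k)+\hat{w}_r(k)\in\mathcal{U}(\eta(k))$ by \eqref{eq:RCI-optim-constr}; in both cases the applied input \eqref{eq:final_control_law} $u(k)=M^{-1}(\eta(k))w(k)$ obeys $Tu(k)\le g$, i.e. $u(k)\in\mathcal{U}_{car}$. For boundedness I would note that the feasible set of \eqref{eq:MPC-compact-optim} is itself bounded: from $\tilde{z}(k)=\tilde{z}(k+N|k)-\Theta_N(\mathbf{w}-\mathbf{w}_r)$ with $\tilde{z}(k+N|k)\in\Sigma_N$, $\mathbf{w}$ confined to the uniformly bounded sets $\mathcal{U}(\eta(k))\times\hat{\mathcal{U}}^{N-1}$ (the parallelogram $\mathcal{U}(\eta)$ is bounded uniformly in $\eta$ by Lemma~2 and $\eta_2=\varphi\in[-\overline{\varphi},\overline{\varphi}]$), and $\mathbf{w}_r$ bounded by Remark~\ref{remark:disturbance_ball_bound}. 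Thus recursive feasibility confines $\tilde{z}(k)$ to a fixed bounded set (to $\Sigma_N$ in the terminal mode), so $\tilde{z}$, and hence $\tilde{w}=w-w_r$, is uniformly bounded; optionally, using $V_N(k)$ (the optimal value of \eqref{eq:MPC-compact-optim}) as a Lyapunov‑like function with the above candidate gives $V_N(k+1)-V_N(k)\le-\ell(\tilde{z}(k+1),\tilde{w}(k))+\bar{\ell}$ with $\bar{\ell}$ the appended stage cost evaluated on the compact $\Sigma_N$ and bounded $w_r$, sharpening this to uniform ultimate boundedness of $\tilde{z}$ in $\Sigma_N$. Finally, boundedness of $\tilde{z}$ together with Lemma~\ref{lem:full-state-track} — whose hypotheses on $v_r,\varphi_r$ are ensured by Assumption~\ref{ass:ref-trajectory} — propagates, through the asymptotically stable tracking‑error zero dynamics and the invertibility of \eqref{eq:FL-output-transf}, to boundedness of $\tilde{\eta}$ and therefore of $\tilde{q}=q-q_r$, which is exactly the stable full‑state tracking required by Problem~\ref{problem-formulation}.

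I expect the main obstacle to be this last step: reconciling the absence of a terminal cost with the non‑vanishing reference input $w_r$ in the stage cost when arguing the Lyapunov‑type decrease, and then rigorously transferring boundedness of the linearized error $\tilde{z}$ through the cascade with the nonlinear internal dynamics via Lemma~\ref{lem:full-state-track}. The recursive‑feasibility bookkeeping, though lengthy, is essentially mechanical once the inclusion chain $K\Sigma_N\subseteq\hat{\mathcal{U}}\subseteq\mathcal{U}(\eta)$ and the RPI property of $\Sigma_N$ from Proposition~\ref{prop:terminal-controller} are in hand.
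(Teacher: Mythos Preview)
Your proposal is correct and follows essentially the same two-part structure as the paper's proof: recursive feasibility via the shifted sequence appended with $K\tilde z(k+N|k)$ (using $K\Sigma_N\subseteq\hat{\mathcal U}\subseteq\mathcal U(\eta)$ and the RPI property of $\Sigma_N$ from Proposition~\ref{prop:terminal-controller}), followed by Lemma~\ref{lem:full-state-track} to lift boundedness of $\tilde z$ to full-state tracking through the internal dynamics. Your treatment is in fact more careful than the paper's on two points the authors gloss over---the explicit constraint-by-constraint verification of the shifted candidate and the boundedness of $\tilde z$ via the bounded feasible set rather than the paper's somewhat informal ``in at most $N$ steps $\tilde z$ reaches $\Sigma_N$'' claim---so your anticipated obstacles are real but do not indicate a gap in your argument.
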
 
\begin{proof}

The proof can be divided into two  parts: 

{\it (I) Stability and input constraint fulfillment of the linearized tracking error dynamics}:
First, let's consider the input-output linearized model \eqref{eq:dt-feedback_linearized_model}. If at the generic time $k$, \eqref{eq:MPC-compact-optim} admits a solution for a given initial condition $\Tilde{z}(k)$ and for some $N>0$, then the optimal control sequence $\{w^*(k),\,w^*(k+1),\dots,\, w^*(k+N-1)\}$ with $w^*(k+i)\in\mathcal{U}(\eta(k)),\forall \eta(k),\, \forall i=0,1,\dots N-1$, is such that $\Tilde{z}(k+N)\in\Sigma_N$. At time $k+1$, a feasible solution to optimization \eqref{eq:MPC-compact-optim} can be constructed from the optimal solution at time $k$, i.e., $\{w^*(k+1),w^*(k+2)\dots,\, w^*(k+N-1),\, K\Tilde{z}(k+N) \}$. 

Indeed, the last control move $K\Tilde{z}(k+N)$ is, by construction, always  constraint-admissible inside the RPI region $\Sigma_N.$ 
As a consequence, the optimization \eqref{eq:MPC-compact-optim} is recursively feasible ensuring that, in at most $N$ steps, $\tilde{z}(k)$ is steered into $\Sigma_N.$
Then, given the RPI nature of $\Sigma_N,$ we can also conclude that 
$\Tilde{z}(k)$ is UUB in $\Sigma_N.$ Furthermore, since the used input-output linearization and discretization commutes (see property \eqref{property1}), the linearized tracking error dynamics \eqref{eq:error-lin-sys} is stable under the effect of the proposed dual-mode MPC.

{\it (II) Bounded Tracking Error for \eqref{eq:bicycle-model}}:
As proven in part (I), $w(k)$ computed by Algorithm~\ref{alg:FL-MPC}  stabilizes the feedback linearized error dynamics \eqref{eq:dt-error-lin-sys}. Therefore, given the result of Lemma~\ref{lem:full-state-track} and by applying the input transformation~\eqref{eq:FL-input-transform}, the control law \eqref{eq:final_control_law} solves the considered reference tracking problem with a bounded tracking error $\tilde{q}(k).$ 
\end{proof}




\section{Experimental Results}\label{sec:exp-res}
In this section, the experimental results, obtained using a Quanser Qcar\footnote{\url{https://www.quanser.com/products/qcar/}}, are presented to show the effectiveness of the proposed FL-MPC tracking controller and compare it with
 alternative solutions. In particular, the chosen competitors are the nonlinear MPC tracking strategy described in Section \ref{sec:NL-MPC} (namely ``Nonlinear MPC''), and the the constrained adaptive backstepping controller developed in \cite{hu2021adaptive} (namely ``Backstepping''). Two different versions of the proposed FL-MPC scheme have been tested. The first one
 exactly follows Algorithm~\ref{alg:FL-MPC} (hereafter referred to as ``Dual-mode FL-MPC''). On the other hand, the second one (namely ``FL-MPC'') executes Algorithm~\ref{alg:FL-MPC} but it never activates the terminal control law, i.e., it solves \eqref{eq:MPC-compact-optim} for any $k\geq0.$    
\if\showfig 1
\begin{figure}[!h]
	\centering
	\includegraphics[width=0.95\linewidth]{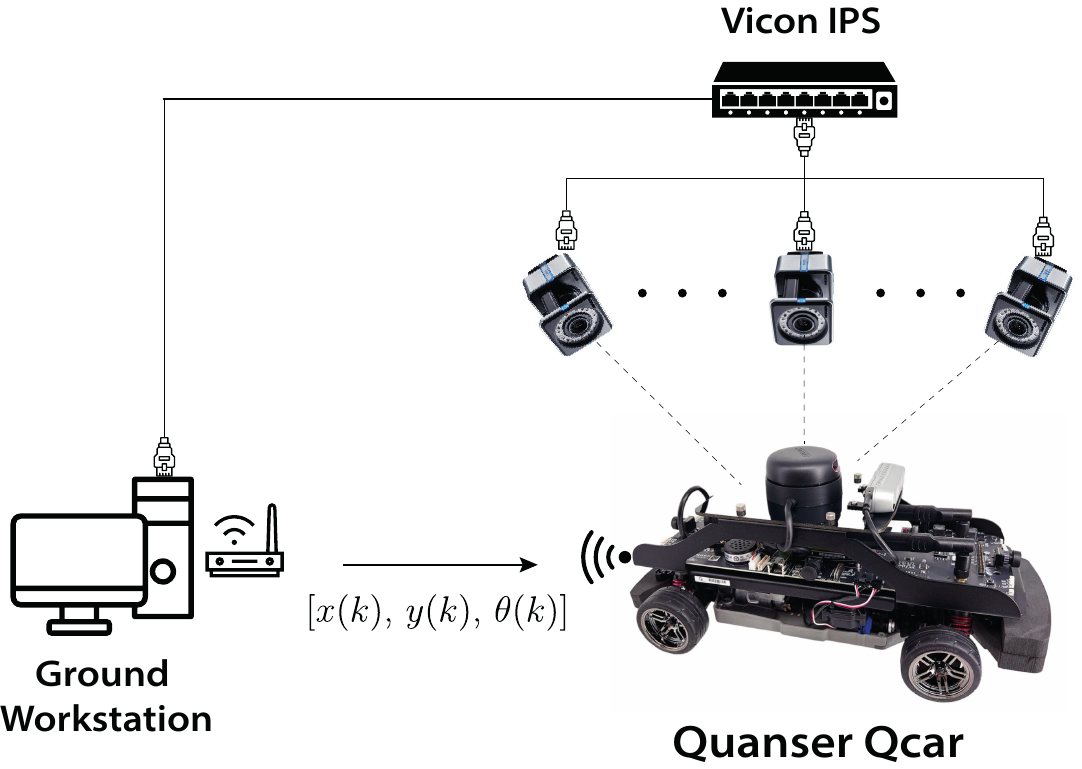}
	\caption{Proposed experimental setup}
	\label{fig:setup}
\end{figure}
\fi
\subsection{Experimental Setup}
The considered experimental setup is depicted in Fig. \ref{fig:setup}, and it consists of:
\begin{enumerate}[label=\alph*)]
	\item a Quanser Qcar;
	\item a camera-based Indoor Positioning System (IPS);
	\item a ground workstation;
	\item a Wifi communication channel between the ground workstation and the car.
\end{enumerate}
The autonomous car-like vehicle is the Quanser Qcar open-architecture prototype, which is designed for academic research experiments.  The car has a size of $0.39\,\times \, 0.19 \,\times \, 0.20\, m$, weights $2.7\, kg$, and its wheelbase measures $l=0.256 m$. Onboard, the car is equipped with different sensors (encoders gyroscope, accelerometer, magnetometer, lidar, and depth and RGB cameras)
 an \textit{NVIDIA® Jetson™ TX2 with 2 GHz quad-core ARM Cortex-A57 64-bit + 2 GHz Dual-Core NVIDIA Denver2 64-bit CPU and 8GB memory}. The considered maximum  longitudinal speed is $\overline{v}=1\, m / s$, while the maximum steering angular velocity is $\overline{\omega}=10\,rad / s$. In addition, due to the vehicle's mechanical construction, the front wheel's steering angle cannot exceed  $\overline{\varphi}=0.6\,rad$. The developed tracking algorithms have been implemented and cross-compiled in \textit{C} language and run onboard on the Nvidia Jetson CPU.  A sampling time $T_s=0.01 \,s$ has been used for all the performed tests.  
 
 The IPS consists of a set of 12 Vicon Vero cameras connected via a wired connection to the ground workstation. The camera system is used to localize the car in the workspace, similar to a standard GPS system. In particular, the cameras detect and track a set of reflective markers placed on the Qcar. The positions of the markers are real-time collected and processed on the ground workstation by the Vicon Tracker software, which  accurately reconstructs, via a triangulation algorithm, the position and orientation of car.

 The ground workstation is a desktop computer consisting of a \textit{13th Gen Intel(R) Core(TM) i9-13900KF} CPU, a \textit{NVIDIA GeForce RTX 4070} GPU, and 64GB of RAM. The workstation is used to estimate the pose of the car and broadcast it via a TCP/IP communication channel.
\subsubsection{Configuration of the proposed controller} To implement the proposed tracking controller strategy the following parameters have been considered: $\Delta=0.35$, $Q=I_{2\times 2}$, $R=0.01\cdot I_{2\times 2}$, $K=4I_{2\times 2}$. The radius of the worst-case circular input constraint set has been computed as in \eqref{eq:worst-case-constr}, obtaining $\hat{r}=1$. 
The reference trajectory is built to comply with \eqref{eq:worst-case-disturbance}, with $r_d=11.54$. As a consequence, the feedback control gain $K=4I_{2\times 2}$ has been chosen such that RPI condition \eqref{eq:feasibility-constraint} is satisfied, with $A_{cl}=0.96I_{2\times 2}$, $S=16 I_{2\times 2}$,  $G=0.25 I_{2\times2}$, $\xi=4.69\cdot 10^{-8}$, $\lambda=0.9998$, computed as outlined in in Proposition \ref{prop:terminal-controller}. The idea described in Remark~\ref{rem:QP-formulation} has been used to obtain a QP formulation of the derived QCQP optimization  \eqref{eq:MPC-compact-optim}. 
In particular, the quadratic constraints \eqref{eq:MPC-compact-3}-\eqref{eq:MPC-compact-4} have been inner approximated using  two decahedra, 
i.e., two polyhedral sets defined as in  \eqref{eq:approx-poly-constr1}-\eqref{eq:approx-poly-constr2}, with $n_w=n_N=10$.
The control optimization problem  has been solved on the car's processing unit considering a prediction horizon $N=10$ using an \textit{Active Set} solver algorithm. 
The \textit{Active Set} algorithm makes use of a \textit{Cholesky} decomposition of the hessian matrix $H$, which, being constant for the proposed optimization, has been precomputed to reduce the online computational load.  The computational times obtained for the used solver are reported in Section \ref{sec:comp-time}.

\subsubsection{Car's state estimation} 
 The state of the car $q(k)$ is onboard estimated by means of an  Unscented Kalman Filter (UFK).
%
%
In particular, the implemented UKF 
is outlined in \cite{van2001square} and it exploits the nonlinear kinematic model \eqref{eq:bicycle-discrete} and different collected sensor information: (i) the estimated position and orientation of the Car provided by the Vicon Camera System, (ii) encoder and (iii) IMU (gyroscope and accelerometer) measurements.
The UKF has been configured with the following parameters: process and measurement covariance matrices $Q_{UKF}=diag([10^{-3},\, 10^{-3},\,10^{-1},\,10])$, $R_{UKF}=diag([2\cdot10^{-5},\, 2\cdot10^{-5},\,10^{-4},\,10^{-5}])$, initial state estimation covariance matrix $P^0_{UKF}=diag([10^{-5},\, 10^{-5},\, 10^{-6},\,10^{-6}])$, sigma-points parameters $\alpha=0.9$, $\beta=2$, $\kappa=0$ (the interested reader shall refer to \cite{van2001square} for further details about the used parameters).

\subsubsection{Reference trajectory generation}
A reference trajectory complying with assumption \ref{ass:ref-trajectory} has been generated using a cubic spline interpolation method. The interpolation algorithm receives in input a sequence of waypoints describing the desired path and, in output, it assigns a crossing time based on path curvature and desired average speed. 
Then, each waypoint is interpolated using quintic splines, obtaining the position $x_r,\,y_r$, velocity $\dot{x}_r,\,\dot{y}_r$, acceleration $\ddot{x}_r,\,\ddot{y}_r$, and jerk $\dddot{x}_r,\,\dddot{y}_r$, needed to compute the reference car's state $q_r$, and inputs $u_r$, as in \eqref{eq:ref-traj-variable-state}-\eqref{eq:ref-traj-variable-inputs}.

\subsubsection{Configuration of the competitor schemes} Each competitor scheme has been configured to obtain the best tracking performance in the performed experiments. Specifically, the nonlinear MPC optimization \eqref{eq:NL-MPC-optimization} has been solved considering the following LQ cost matrices, $Q=diag(\left[135,\, 135,\, 65,\, 65\right])$, $R=diag(\left[0.3,\,0.1\right])$, and a prediction horizon $N=5$. The nonlinear optimization has been solved using \textit{Sequential Quadratic Programming (SQP)} method. 
%
On the other hand, the backstepping algorithm developed in \cite{hu2021adaptive} has been tuned using: $d_x=0.3$, $\sigma=0.3$, $\alpha_c=0.3$, $k_3=3.5$. The interested reader shall refer to \cite{hu2021adaptive} for a detailed explanation of the used parameters. It is worth mentioning that the steering command generated by the algorithm is subject to undesired chattering effects, typical of backstepping control algorithms \cite{tanner2003backstepping}. In order to mitigate such an undesired effect, the control signal has been prefiltered using a low-pass filter with cut-off frequency $\omega_c=100 \frac{rad}{s}$.

\subsubsection{Evaluation of the tracking performance}
\label{sec:comp-time} To evaluate the tracking performance of the proposed controller and alternative schemes, the Integral Square Error (ISE) ($\int_{0}^{T_f}e(t)^2dt$) and Integral Time Squared Error (ITSE) $(\int_{0}^{T_f}t e(t)^2dt)$  indexes have been used. In particular, for each performed experiment, three different error signals are measured: path distance error $e_{xy}(k)=\|\left[x(k),\,y(k)\right]-\left[x_r(k),\,y_r(k)\right]\|_2$, heading angle error $e_\theta(k)=\theta(k)-\theta_r(k))$, and steering angle error $e_\varphi(k)=\varphi(k)-\varphi_r(k))$. Then, for each collected error signal the ISE and ITSE indexes are computed, i.e., $ISE_{x,y}$, $ITSE_{x,y}$, $ISE_{\theta}$, $ITSE_{\theta}$,  $ISE_{\varphi}$, $ITSE_{\varphi}$. 

\subsubsection{Evaluation of computational times} 
	To assess the computation complexity of the approaches  Dual FL-MC and FL-MPC'' and Nonlinear MPC, the computational times required by each algorithm have been measured. 
In the performed analysis, only the controller computation has been considered, i.e., the state-estimation algorithm as well as the sensor processing have been neglected. 


\subsection{Results}
\begin{figure}[!h]
	\centering
	\includegraphics[width=1\linewidth]{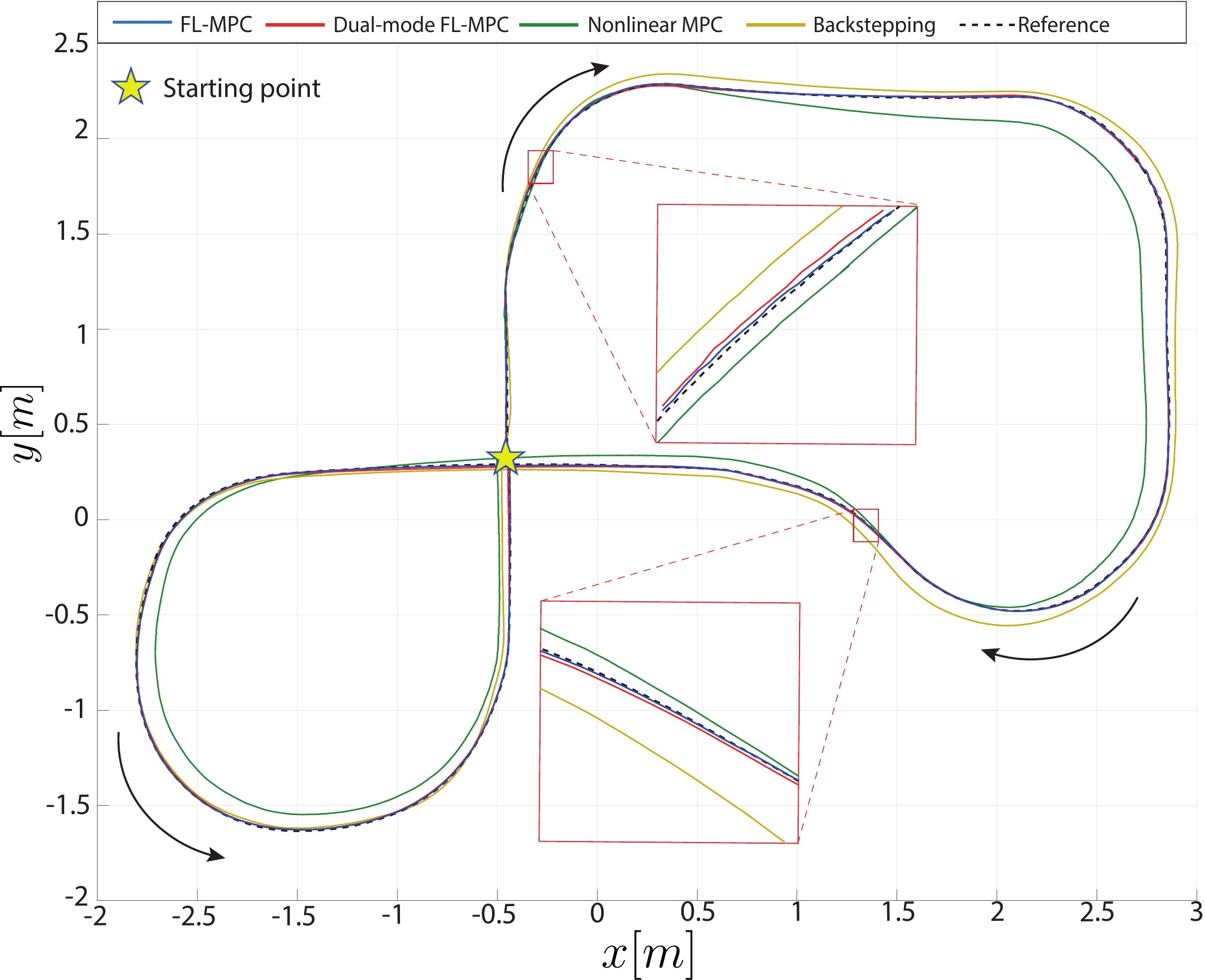}
	\caption{Experimental results: Trajectory}
	\label{fig:results-traj}
\end{figure}
\begin{figure}[!h]
	\centering
	\includegraphics[width=1\linewidth]{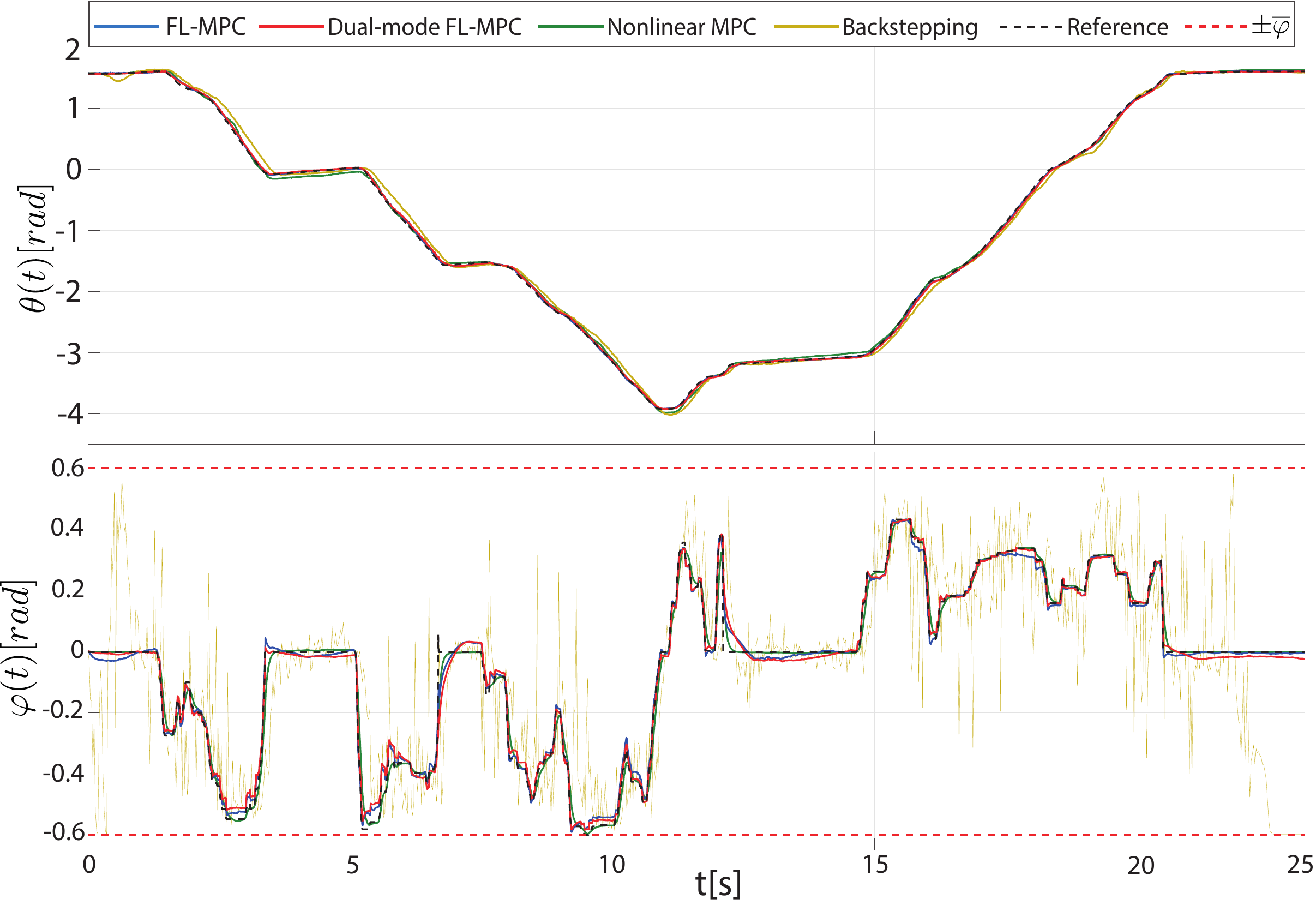}
	\caption{Experimental results: Heading and Steering Angles}
	\label{fig:results-angles}
\end{figure}
\begin{figure}[!h]
	\centering
	\includegraphics[width=1\linewidth]{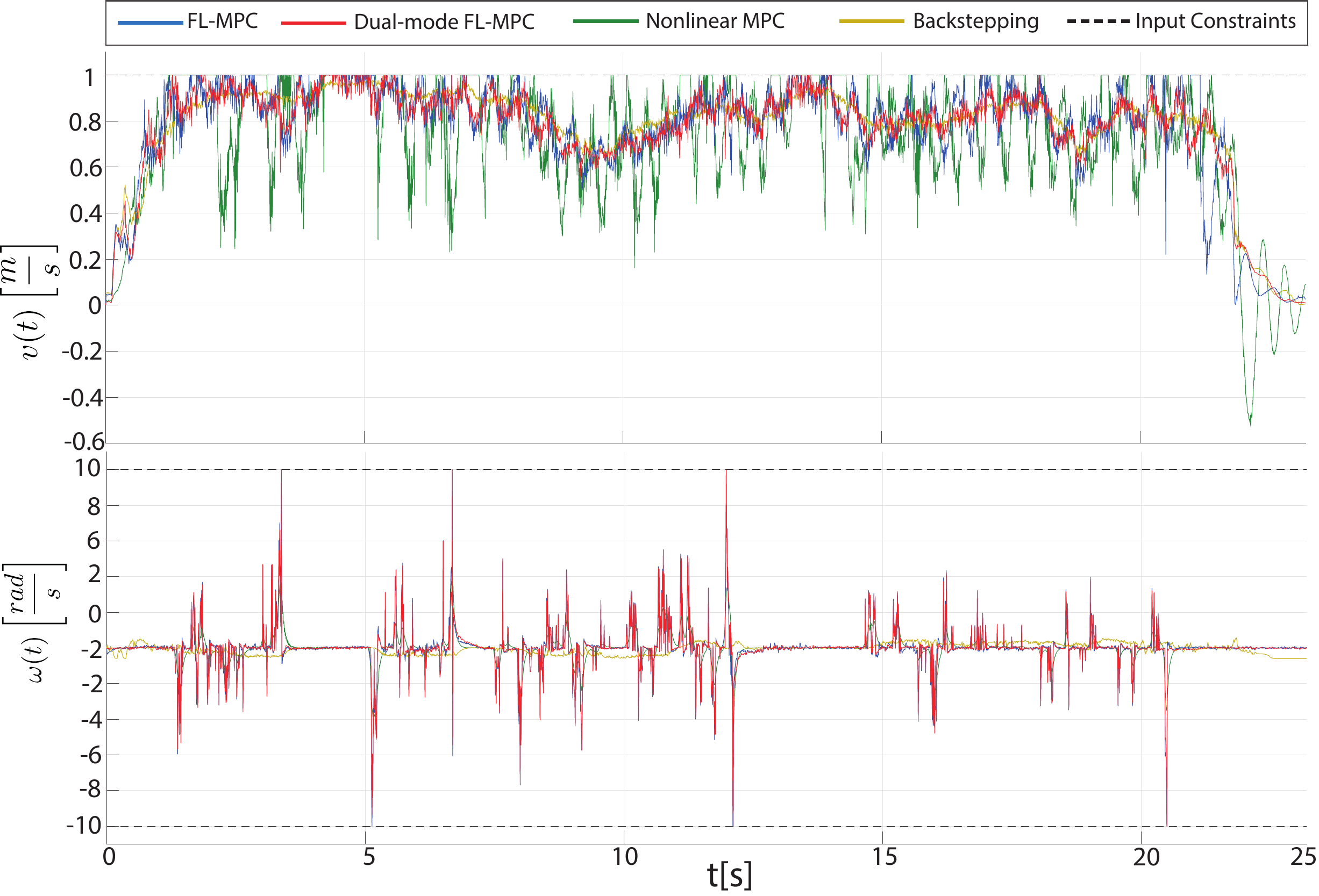}
	\caption{Experimental results: Control Inputs}
	\label{fig:results-inputs}
\end{figure}
The obtained results are collected in Figs. \ref{fig:results-traj}-\ref{fig:results-inputs} and 
and Tables \ref{tab:results-1}-\ref{tab:results-3}. For the interested reader, videos of the performed experiments can be found at the following web link: 
\url{https://youtu.be/aeHZKyRfcEo}.
The tracking performance has been evaluated considering two reference trajectories generated along the same path. The first requires a maximum speed of $0.6\frac{m}{s}$, while the second, a maximum speed of $0.75\frac{m}{s}$. It is worth mentioning that, for both the considered trajectories, several tests have been run, and the obtained results have been averaged in Table \ref{tab:results-1}-\ref{tab:results-2}, respectively.  

Fig. \ref{fig:results-traj} and Tables \ref{tab:results-1}-\ref{tab:results-2} show that for both trajectories, the proposed controller achieves better tracking when compared to the Nonlinear MPC and Backstepping controllers.
In particular, the Nonlinear MPC showed poor tracking performance in all the performed experiments. 
This finds justification in the highly nonconvex nature of optimization \eqref{eq:NL-MPC-optimization}, which converges to local minima and, consequently, to nonoptimal solutions.
Moreover, as shown in Table \ref{tab:results-3}, the reference tracking performance cannot be improved by increasing the prediction horizon. For example, for $N=10,$ the nonlinear optimization solver for \eqref{eq:NL-MPC-optimization} can take up to 
approximately $40 \,ms$ to obtain a solution, which is far above the considered sampling time $T_s=10 \,ms$. 
{
On the other hand, the backstepping controller shows slightly better tracking performance of the Nonlinear MPC. However, a chattering phenomena affects the computed steering angle command (see the rapid and discontinuous switching of the control signal). 
Moreover, as shown in Fig. \ref{fig:results-inputs}, the control inputs computed by the backstepping controller are conservative, i.e. the longitudinal and angular steering velocities never reach the prescribed limits, $\overline{v}$ and $\overline{\omega}$, respectively. The two above mentioned drawbacks justify why the tracking performance of the backstepping scheme are slightly worse of the one achieved with the proposed tracking controller degrades.
}

\begin{table}[!h]
\caption{Comparison of tracking performance: Trajectory 1}
\resizebox{\columnwidth}{!}{%
\begin{tabular}{|c|cc|cc|cc|}
\hline
\multirow{2}{*}{Algorithm} & \multicolumn{2}{c|}{Distance Error } & \multicolumn{2}{c|}{Heading Error} & \multicolumn{2}{c|}{Steering Error} \\ \cline{2-7} 
 & \multicolumn{1}{c|}{$ISE_{xy}$} & $ITSE_{xy}$& \multicolumn{1}{c|}{$ISE_{\theta}$} & $ITSE_{\theta}$ & \multicolumn{1}{c|}{$ISE_{\varphi}$} &$ITSE_{\varphi}$\\ \hline
FL-MPC & \multicolumn{1}{c|}{0.0279} & 0.3191 & \multicolumn{1}{c|}{0.0201} & 0.1797 & \multicolumn{1}{c|}{0.0244} & 0.2834 \\ \hline
 Dual-mode FL-MPC & \multicolumn{1}{c|}{0.0323} & 0.4160 & \multicolumn{1}{c|}{0.0246} & 0.3195 & \multicolumn{1}{c|}{0.0333} &0.3117 \\ \hline
Nonlinear MPC & \multicolumn{1}{c|}{0.2703} & 4.4197 & \multicolumn{1}{c|}{0.0978 } &1.4305 & \multicolumn{1}{c|}{0.0135} & 0.1642\\ \hline
Backstepping \cite{hu2021adaptive} & \multicolumn{1}{c|}{0.1629 } &  2.0904  & \multicolumn{1}{c|}{0.1375} &  1.6539 & \multicolumn{1}{c|}{1.8797} &30.5802 \\ \hline
\end{tabular}
}
\label{tab:results-1}
\end{table}

\begin{table}[!h]
\caption{Comparison of tracking performance: Trajectory 2}
\resizebox{\columnwidth}{!}{%
\begin{tabular}{|c|cc|cc|cc|}
\hline
\multirow{2}{*}{Algorithm} & \multicolumn{2}{c|}{Distance Error } & \multicolumn{2}{c|}{Heading Error} & \multicolumn{2}{c|}{Steering Error} \\ \cline{2-7} 
 & \multicolumn{1}{c|}{$ISE_{xy}$} & $ITSE_{xy}$& \multicolumn{1}{c|}{$ISE_{\theta}$} & $ITSE_{\theta}$ & \multicolumn{1}{c|}{$ISE_{\varphi}$} &$ITSE_{\varphi}$\\ \hline
FL-MPC & \multicolumn{1}{c|}{0.0321} & 0.4718 & \multicolumn{1}{c|}{0.0127} & 0.1713 & \multicolumn{1}{c|}{0.0132} & 0.1480 \\ \hline
 Dual-mode FL-MPC & \multicolumn{1}{c|}{0.0459} & 0.6006 & \multicolumn{1}{c|}{0.0250} & 0.3571 & \multicolumn{1}{c|}{0.0212} &0.2439 \\ \hline
Nonlinear MPC & \multicolumn{1}{c|}{0.2458} & 3.0141 & \multicolumn{1}{c|}{0.0468 } &0.5247 & \multicolumn{1}{c|}{0.0218} & 0.2568\\ \hline
Backstepping \cite{hu2021adaptive} & \multicolumn{1}{c|}{0.2634} & 3.2056  & \multicolumn{1}{c|}{ 0.2378} &  2.6428 & \multicolumn{1}{c|}{ 1.2098 } &20.9822\\ \hline
\end{tabular}
}
\label{tab:results-2}
\end{table}

On the other hand, the performance obtained with the FL-MPC is slightly superior to the ones obtained with the dual-mode MPC. 
The reason behind such a result is that  FL-MPC  uses the RPI region \eqref{eq:terminal-region-shaping} to ensure recursive feasibility, but it never directly relies on the associated controller, which is by nature more conservative. However, the dual-mode MPC implementation shows some computational advantage related to the fact that when the tracking error enters the terminal regions, then a simpler optimization problem \eqref{eq:terminal-optimization} is solved.
Moreover, for both performed experiments, the proposed FL-MPC controller fulfills the vehicle’s input constraints, as shown in Figs. \ref{fig:results-traj}-\ref{fig:results-inputs}.

\begin{table}[!h]
\caption{Comparison of average and maximum computational times (in ms) of Nonlinear and FL-MPC}
\resizebox{\columnwidth}{!}{%
\begin{tabular}{|c|cc|cc|cc|}
\hline
\multirow{2}{*}{Algorithm} & \multicolumn{2}{c|}{N=3} & \multicolumn{2}{c|}{N=5} & \multicolumn{2}{c|}{N=10} \\ \cline{2-7} 
 & \multicolumn{1}{c|}{MAX $(ms)$} & AVG$(ms)$ & \multicolumn{1}{c|}{MAX$(ms)$} & AVG$(ms)$ & \multicolumn{1}{c|}{MAX$(ms)$} &AVG$(ms)$\\ \hline
FL-MPC & \multicolumn{1}{c|}{ 3.3106} & 0.5416 & \multicolumn{1}{c|}{3.8249} & 0.6455 & \multicolumn{1}{c|}{3.1471} & 0.6954 \\ \hline
 Dual-mode FL-MPC & \multicolumn{1}{c|}{1.7726} & 0.3622 & \multicolumn{1}{c|}{3.3520} & 0.3417 & \multicolumn{1}{c|}{2.0981} &0.3638 \\ \hline
Nonlinear MPC & \multicolumn{1}{c|}{7.2283} &  3.1933 & \multicolumn{1}{c|}{15.6261 } & 5.2227 & \multicolumn{1}{c|}{40.5351} &  6.8099\\ \hline
\end{tabular}
}
\label{tab:results-3}
\end{table}
Concerning the computational analysis, the proposed Dual-mode FL-MPC and FL-MPC solutions have been compared with the Nonlinear MPC  considering the prediction horizons $N=3, N=5, N=10$. The obtained results are collected in Table \ref{tab:results-3}. It can be appreciated how the computational times, both maximum and average ones, of the proposed solutions are significantly lower than the ones of the Nonlinear MPC, especially for larger values of $N.$ 

%
\color{black}
\section{Conclusions}\label{sec:conclusion}
In this paper, a novel Feedback Linearized Model Predictive Control strategy for input-constrained self-driving cars has
been presented. The proposed strategy combines two main ingredients: 1) an input-output FL technique and 2) a dual-mode MPC framework. The obtained tracking
controller has the peculiar capability of efficiently dealing with state-dependent input constraints acting on the feedback linearized car's model while ensuring recursive feasibility,
stability, and velocity constraints fulfillment. Extensive experimental results and comparisons have been carried out to highlight the features and advantages of the proposed tracking controller.






\bibliographystyle{IEEEtran}
\bibliography{IEEEabrv,references}
%


\vfill

\end{document}